\documentclass[10pt, twocolumn, comsoc]{IEEEtran}

\usepackage{graphicx}
\usepackage[noadjust]{cite}
\usepackage{mcite}
\usepackage{amsfonts,helvet}
\usepackage{fancyhdr}
\usepackage{threeparttable}
\usepackage{booktabs}
\usepackage{amsthm}
\usepackage{siunitx}
\usepackage{amssymb}
\usepackage{dsfont}
\usepackage{xcolor}
\usepackage{eucal}
\usepackage{amsmath}
\usepackage[ruled,vlined]{algorithm2e}
\usepackage{enumerate}
\usepackage{cancel}
\usepackage{comment}
\usepackage{algpseudocode}
\usepackage{bm}
\usepackage{epstopdf}
\usepackage{placeins}

\newtheorem{definition}{Definition}

\newtheorem{lemma}{Lemma}

\newtheorem{proposition}{Proposition}
\theoremstyle{definition}
\newtheorem{remark}{Remark}

\SetKwInput{KwInput}{Initialize}
\SetKwInput{KwOutput}{Output}
\SetKwProg{FnStageOne}{Stage 1}{}{}
\SetKwProg{FnStageTwo}{Stage 2}{}{}

\newcommand{\rev}[1]{#1}
\newcommand{\petar}[1]{#1}

\providecommand{\CfgBaseN}{500}
\providecommand{\CfgBaseM}{64}
\providecommand{\CfgBaseK}{16}
\providecommand{\CfgBaseL}{64}
\providecommand{\CfgBaseNs}{64}
\providecommand{\CfgBaseRank}{16}
\providecommand{\CfgAlphaCone}{0.7}
\providecommand{\CfgAlphaSUS}{0.1}

\providecommand{\CfgCandidateMult}{4}
\providecommand{\CfgCsiError}{0.1}
\providecommand{\CfgMinRsrp}{-120.0}
\providecommand{\CfgWmmseIters}{40}
\providecommand{\CfgWmmseTol}{0.0001}
\providecommand{\CfgTrials}{200}

\providecommand{\CfgMainN}{128}
\providecommand{\CfgMainL}{64}
\providecommand{\CfgMainNs}{64}
\providecommand{\CfgMainRank}{16}
\providecommand{\CfgMainAodError}{2.0}
\providecommand{\CfgMainPowerError}{1.0}
\providecommand{\CfgPfN}{500}
\providecommand{\CfgPfNs}{256}
\providecommand{\CfgPfSlots}{500}
\providecommand{\CfgPfTrajectories}{20}
\providecommand{\CfgPfTc}{50.0}
\providecommand{\CfgPfSnr}{20.0}
\providecommand{\CfgPfEpsilon}{0.001}

\providecommand{\CfgGpAnchorsPerSlot}{16}

\providecommand{\CfgGpWindowSlots}{20}

\providecommand{\CfgGpLengthScale}{0.5}
\providecommand{\CfgGpAodScale}{30.0}
\providecommand{\CfgGpNoiseLevel}{0.1}
\providecommand{\CfgGpSlots}{100}
\providecommand{\CfgGpTrajectories}{20}

\providecommand{\ProdSweepResultTag}{twc_v1}

\providecommand{\ProdMainTrials}{200}
\providecommand{\ProdMainPAtFifteen}{35.07}
\providecommand{\ProdMainPCIAtFifteen}{0.49}
\providecommand{\ProdMainLAtFifteen}{35.06}
\providecommand{\ProdMainLCIAtFifteen}{0.49}
\providecommand{\ProdMainSusAtFifteen}{30.28}
\providecommand{\ProdMainSusCIAtFifteen}{0.60}
\providecommand{\ProdMainAllSusAtFifteen}{41.79}
\providecommand{\ProdMainAllSusCIAtFifteen}{0.47}

\providecommand{\ProdMainLGainVsGrantedSusAtFifteen}{15.8}
\providecommand{\ProdMainLGainVsGrantedSusAtTwenty}{17.6}
\providecommand{\ProdPFTrajectories}{20}
\providecommand{\ProdPFSlots}{500}
\providecommand{\ProdPFSumRate}{17.68}
\providecommand{\ProdPFSumRateCI}{0.98}
\providecommand{\ProdPFJain}{0.180}
\providecommand{\ProdPFJainCI}{0.012}

\providecommand{\ProdGPTrajectories}{20}
\providecommand{\ProdGPSlots}{100}
\providecommand{\ProdGPNoRate}{42.99}
\providecommand{\ProdGPNoRateCI}{1.29}
\providecommand{\ProdGPCausalRate}{52.49}
\providecommand{\ProdGPCausalRateCI}{1.30}
\providecommand{\ProdGPOracleRate}{52.83}
\providecommand{\ProdGPOracleRateCI}{1.36}
\providecommand{\ProdLBase}{64}
\providecommand{\ProdLBaseP}{55.04}
\providecommand{\ProdLBasePCI}{0.57}
\providecommand{\ProdLBaseL}{54.76}
\providecommand{\ProdLBaseLCI}{0.56}
\providecommand{\ProdLBaseSus}{29.40}
\providecommand{\ProdLBaseSusCI}{0.71}
\providecommand{\ProdNsBase}{64}
\providecommand{\ProdNsBaseL}{54.74}
\providecommand{\ProdNsBaseLCI}{0.61}
\providecommand{\ProdNsMax}{500}
\providecommand{\ProdNsMaxL}{54.39}
\providecommand{\ProdNsMaxLCI}{0.63}
\providecommand{\ProdMBase}{64}
\providecommand{\ProdMBaseEta}{0.683}
\providecommand{\ProdMBaseEtaCI}{0.005}
\providecommand{\ProdMBaseRankPZeroFive}{29.0}
\providecommand{\ProdMBaseRankMedian}{32.0}
\providecommand{\ProdMBaseRankPNinetyFive}{36.0}
\providecommand{\ProdMBaseL}{54.81}
\providecommand{\ProdMBaseLCI}{0.54}
\providecommand{\ProdMBaseLAdaptive}{58.70}
\providecommand{\ProdMBaseLAdaptiveCI}{0.51}
\providecommand{\ProdKBase}{16}
\providecommand{\ProdKBaseL}{54.61}
\providecommand{\ProdKBaseLCI}{0.56}
\providecommand{\ProdAlphaDefault}{0.7}
\providecommand{\ProdAlphaDefaultL}{54.96}
\providecommand{\ProdAlphaDefaultLCI}{0.58}
\providecommand{\ProdAlphaDefaultCone}{27.8}
\providecommand{\ProdAlphaDefaultConeCI}{1.0}
\providecommand{\ProdAlphaMin}{0.5}
\providecommand{\ProdAlphaMinL}{56.94}
\providecommand{\ProdAlphaMinLCI}{0.49}
\providecommand{\ProdAlphaMax}{0.9}
\providecommand{\ProdAlphaMaxL}{53.55}
\providecommand{\ProdAlphaMaxLCI}{0.61}
\providecommand{\ProdDropMax}{0.7}
\providecommand{\ProdDropMaxL}{47.81}
\providecommand{\ProdDropMaxLCI}{0.56}
\providecommand{\ProdAodMax}{20}
\providecommand{\ProdAodMaxL}{36.76}
\providecommand{\ProdAodMaxLCI}{0.61}
\providecommand{\ProdPowerMax}{8}
\providecommand{\ProdPowerMaxL}{48.55}
\providecommand{\ProdPowerMaxLCI}{0.59}
\providecommand{\ProdPFDitusSumRate}{17.68}
\providecommand{\ProdPFDitusSumRateCI}{0.98}
\providecommand{\ProdPFDitusJain}{0.180}
\providecommand{\ProdPFDitusJainCI}{0.012}

\providecommand{\ProdPFDitusServiceCoverage}{0.670}
\providecommand{\ProdPFDitusServiceCoverageCI}{0.014}

\providecommand{\ProdPFMaxPowerSumRate}{6.69}
\providecommand{\ProdPFMaxPowerSumRateCI}{0.52}
\providecommand{\ProdPFMaxPowerJain}{0.185}
\providecommand{\ProdPFMaxPowerJainCI}{0.009}

\providecommand{\ProdPFMaxPowerServiceCoverage}{0.678}
\providecommand{\ProdPFMaxPowerServiceCoverageCI}{0.007}

\providecommand{\ProdPFDftScoreSumRate}{8.75}
\providecommand{\ProdPFDftScoreSumRateCI}{1.27}
\providecommand{\ProdPFDftScoreJain}{0.176}
\providecommand{\ProdPFDftScoreJainCI}{0.012}

\providecommand{\ProdPFDftScoreServiceCoverage}{0.585}
\providecommand{\ProdPFDftScoreServiceCoverageCI}{0.017}

\providecommand{\ProdPFRandomSumRate}{5.83}
\providecommand{\ProdPFRandomSumRateCI}{0.46}
\providecommand{\ProdPFRandomJain}{0.210}
\providecommand{\ProdPFRandomJainCI}{0.009}

\providecommand{\ProdPFRandomServiceCoverage}{0.813}
\providecommand{\ProdPFRandomServiceCoverageCI}{0.013}

\providecommand{\ProdPFMaxSrSumRate}{71.83}
\providecommand{\ProdPFMaxSrSumRateCI}{1.96}
\providecommand{\ProdPFMaxSrJain}{0.034}
\providecommand{\ProdPFMaxSrJainCI}{0.001}

\providecommand{\ProdPFMaxSrServiceCoverage}{0.089}
\providecommand{\ProdPFMaxSrServiceCoverageCI}{0.007}

\providecommand{\ProdPFRoundRobinSumRate}{12.05}
\providecommand{\ProdPFRoundRobinSumRateCI}{0.44}
\providecommand{\ProdPFRoundRobinJain}{0.202}
\providecommand{\ProdPFRoundRobinJainCI}{0.006}

\providecommand{\ProdPFRoundRobinServiceCoverage}{1.000}
\providecommand{\ProdPFRoundRobinServiceCoverageCI}{0.000}

\DeclareMathAlphabet{\mathcal}{OMS}{cmsy}{m}{n}

\begin{document}
\title{{Digital Twin-Aided Prescreening for User Scheduling in MU-MIMO Downlink Systems}}
\author{
Namhyun~Kim, Mahmoud~Saad~Abouamer, Jeonghun~Park, Petar~Popovski, and Ahmed~Alkhateeb%
\thanks{
N. Kim and A. Alkhateeb are with the School of Electrical, Computer and Energy Engineering, Arizona State University, Tempe, AZ 85287, USA (e-mail: namhyunk@asu.edu, alkhateeb@asu.edu). M. S. Abouamer and P. Popovski are with the Department of Electronic Systems, Aalborg University, 9220 Aalborg \O st, Denmark (e-mail: mahmoudabo@es.aau.dk, petarp@es.aau.dk). J. Park is with the School of Electrical and Electronic Engineering, Yonsei University, Seoul 03722, South Korea (e-mail: jhpark@yonsei.ac.kr).
}
}

\maketitle \setcounter{page}{1} 

\begin{abstract}
\rev{In dense deployments, massive multi-user multiple-input multiple-output (MU-MIMO) base stations can acquire instantaneous channel state information (CSI) for only a limited subset of users per scheduling interval, restricting multiuser diversity. We therefore propose Digital Twin User pre-Screening (DiTUS), a digital-twin (DT)-aided framework that identifies promising users before instantaneous CSI acquisition. DiTUS forms spatial covariances from DT-inferred departure angles and path powers. Optional Gaussian-process (GP) calibration mitigates path-power bias, while the dominant rank-$r$ eigenspace of the aggregate covariance yields common reference beams. It \emph{prescreens} the pool using DiTUS-P, a low-complexity projection-energy rule, or DiTUS-L, a greedy log-determinant rule that promotes spatial compatibility. A two-level protocol collects scalar beam reports from shortlisted users and requests $r$-dimensional effective-channel vectors only from the scheduled set. The framework also supports proportional-fair scheduling. At 15~dB under DT imperfections, simulations with 128 candidates, a 64-user effective-CSI acquisition budget, and a 64-user shortlist show that DiTUS-L achieves $\ProdMainLAtFifteen\!\pm\!\ProdMainLCIAtFifteen$~bps/Hz versus $\ProdMainSusAtFifteen\!\pm\!\ProdMainSusCIAtFifteen$~bps/Hz for semi-orthogonal user selection (SUS) with full-dimensional CSI from 64 users, demonstrating that DT-based prescreening preserves substantial multiuser-diversity gains by identifying strong, spatially compatible users before acquiring effective-channel vectors.}

\end{abstract}

\begin{IEEEkeywords}
Massive MIMO, user scheduling, digital twin, two-stage precoding, beam-based feedback, limited feedback, channel state information, proportional-fair scheduling.
\end{IEEEkeywords}

\section{Introduction}
\begingroup%

\rev{Massive multi-user multiple-input multiple-output (MU-MIMO) systems achieve high spectral efficiency by scheduling spatially compatible users and precoding their streams~\cite{Bjornson:2017,Goldsmith:2003,Tse:2005}. Each interval, the base station (BS) selects $K$ of $N$ candidates. With instantaneous channel state information (CSI) for a large pool, semi-orthogonal user selection (SUS) with zero-forcing (ZF) precoding can approach the dirty-paper-coding sum-rate benchmark~\cite{Yoo:2006,Lee:2017}. In dense deployments, however, full-pool CSI acquisition incurs substantial measurement and feedback overhead.}

\rev{Practical feedback permits instantaneous-CSI acquisition from at most $L$ users per interval. When $L\!\ll\!N$, SUS and related norm- or projection-based methods~\cite{Yoo:2006,Dimic:2005} search only the observed subset, limiting full-pool multiuser diversity. Two-stage and statistical precoding~\cite{Adhikary:2013,Lee:2017} reduce the instantaneous CSI dimension but require reliable long-term statistics or coarse feedback. The challenge is to identify promising users before requesting instantaneous CSI.}

\rev{Site-specific digital twins (DTs) provide angles of departure (AoDs) and average path powers from geometry-based propagation models~\cite{Wang:2023DTSurvey,DT:Alkhateeb:2023,position:Li:2025}. These slowly varying parameters~\cite{Molisch:2004} can be maintained without per-slot instantaneous CSI from every user, enabling DT-aided \emph{prescreening}: DT statistics reduce the pool to $N_s$ users before instantaneous reports determine the final schedule.}

\subsection{Related Work}

\subsubsection{User Scheduling in MU-MIMO}

\rev{MU-MIMO sum-rate scheduling is NP-hard~\cite{Dimic:2005}, motivating SUS~\cite{Yoo:2006}, norm- and greedy-selection methods~\cite{Dimic:2005,Bayesteh:2008}, proportional-fair scheduling~\cite{Viswanath:2002,Kushner:2004}, and deep reinforcement learning~\cite{Xie:2020RL}. Many formulations assume instantaneous CSI from all $N$ candidates although only $K\!\ll\!N$ are served; full-pool instantaneous-CSI acquisition can therefore dominate the overhead in dense deployments.}

\subsubsection{Two-Stage Precoding and Statistical Grouping}

\rev{Two-stage precoding~\cite{Adhikary:2013,Lee:2017} and hybrid beamforming~\cite{Alkhateeb:2014,Molisch:2017} use statistical outer and reduced-dimensional inner components~\cite{Peel:2005,Shi:2011,statistical:zhang:2017}. They reduce the instantaneous CSI dimension but require reliable long-term channel information, whose measurement can consume substantial pilot and feedback resources.}

\subsubsection{Digital Twins for Wireless Communications}

\rev{Digital twins and site-specific channel maps have been studied for wireless physical-layer design~\cite{Wang:2023DTSurvey,DT:Alkhateeb:2023}, including channel calibration and DT-aided subspace channel estimation~\cite{DT:Xu:2024,Alikhani2025DigitalTwin}, GP-assisted channel-statistics prediction from uncalibrated DTs and sparse measurements~\cite{Abouamer2025ChannelStatistics}, channel-map-based estimation~\cite{CKM:Jiang:2025}, pilot management~\cite{DT:Sarker:2026}, and Bayesian channel learning~\cite{position:Moro:2025}. Learned minimum mean-square error (MMSE) channel estimation provides a complementary data-driven approach~\cite{Neumann:2018}. Scheduling-level prescreening additionally requires identifying relevant multipath information, ranking candidates under imperfect DT predictions, calibrating systematic bias from available measurements, and enforcing an explicit feedback budget.}

\subsection{Contributions}
\rev{This paper proposes DiTUS (Digital Twin User pre-Screening), a DT-aided framework that ranks all $N$ candidates from DT statistics before per-slot reports. It decouples the scalar-report shortlist size $N_s$, selected according to scalar-feedback resources, from the effective-CSI acquisition budget $L$; only the final $K\le L$ users provide $r$-dimensional effective-channel vectors. The model separately accounts for scalar, effective-CSI, reference-signal, and calibration resources. DiTUS adopts Third Generation Partnership Project (3GPP) beam-based measurement and feedback concepts~\cite{3GPP:38214}.}

\rev{The principal contribution is an integrated prescreening framework that identifies AoDs and path powers as scheduling-relevant DT information, ranks candidates under imperfect predictions, optionally calibrates path-power bias, and embeds these components in a two-level protocol. DT covariances yield a dominant subspace and common reference beams. \petar{Each shortlisted user reports $(b_u,Q_u,c_u)$: a beam index, quantized channel quality indicator (CQI), and one-bit cone indicator. Only the final $K$ users report effective-channel vectors for weighted minimum mean-square error (WMMSE) precoding.} With $B_Q$ CQI bits and $B_g(r)$ bits per effective-channel vector, the uplink load is $N_s(\lceil\log_2r\rceil+B_Q+1)+K B_g(r)$ bits per instance; downlink reference and Gaussian-process (GP) anchor costs are reported separately. The variants are:}
\begin{itemize}
  \item \rev{\textbf{Projection-score prescreening (DiTUS-P):} A projection-energy score retains the top $N_s$ candidates. Under the predicted covariance model, Proposition~\ref{prop:prescreen-loss} upper-bounds the expected interference-free single-user matched-filter (MF) rate of every discarded user.}
  \item \rev{\textbf{Log-det prescreening (DiTUS-L):} Prescreening is formulated as monotone submodular log-determinant selection. Lemma~\ref{prop:logdet} gives the standard $(1-1/e)$ greedy guarantee for the aggregate-covariance objective over the DT-retained candidate set.}
\end{itemize}

\rev{We further extend DiTUS to proportional-fair (PF) scheduling by incorporating long-term throughput weights into prescreening, CQI ranking, and WMMSE precoding. Static weights can encode priorities but do not guarantee latency or minimum rates. Simulations evaluate the complete interference-coupled pipelines under DT imperfections and quantify the tradeoffs among shortlist size, fairness, calibration, and mismatch. Both DiTUS variants outperform $L$-user SUS throughout the main SNR sweep.}

\begin{figure}[!t]
  \centering
  \includegraphics[width=\columnwidth]{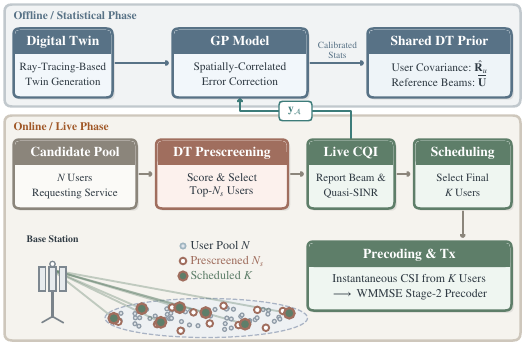}
  \caption{\rev{DiTUS architecture. DT statistics form $\hat{\mathbf R}_u$ and $\overline{\mathbf U}$ for prescreening and instantaneous CQI. An optional anchor measurement supplies a delayed observation for updating the GP in the next slot.}}
  \label{fig:system_model}
\end{figure}

\textit{Notation}:
Throughout this paper, bold lowercase letters ($\mathbf{h}$) denote vectors, bold uppercase letters ($\mathbf{H}$) denote matrices, and calligraphic letters ($\mathcal{D}$) denote sets or datasets. $\mathbb{E}[\cdot]$ denotes expectation, $\mathbb{C}$ is the set of complex numbers, and $\mathbb{R}$ is the set of real numbers. $\|\cdot\|$ denotes the Euclidean norm, $\operatorname{diag}(\cdot)$ forms a diagonal matrix, and $\mathcal{CN}(0,\sigma^2)$ denotes a circularly symmetric complex Gaussian distribution with zero mean and variance $\sigma^2$. The superscript $\sf{H}$ denotes Hermitian transpose, and $\sf{T}$ denotes transpose. Other notation is defined as needed in the text.

\endgroup
\section{System Model}

\subsection{Received Signal Model}
We consider a single BS equipped with an $M$-antenna uniform linear array (ULA). In each slot, the BS selects $K$ single-antenna users from a pool of $N\gg K$ requesting users and serves them over a block-fading channel~\cite{Tse:2005}.\footnote{\rev{We use a ULA for analytical clarity; the covariance construction and prescreening scores extend to other arrays by substituting the appropriate steering vector.}} The received signal vector $\mathbf{y}\in\mathbb{C}^{K\times 1}$ is
\begin{equation}\label{rsModel:2}
    \mathbf{y} = \mathbf{H}\mathbf{F}\mathbf{s} + \mathbf{n},
\end{equation}
where $\mathbf{H} = [\mathbf{h}_1,\ldots,\mathbf{h}_K]^{\sf H}\in\mathbb{C}^{K\times M}$ is the channel matrix stacking individual user channel vectors row-wise, $\mathbf{F}\in\mathbb{C}^{M\times K}$ is the precoder, \rev{$\mathbf{s}\in\mathbb{C}^{K\times 1}$ contains unit-power transmitted symbols with $\mathbb{E}[|s_{k}|^2]=1$, and $\mathbf{n}\sim\mathcal{CN}(0,\sigma^2\mathbf{I}_K)$ is additive white Gaussian noise (AWGN).}
\rev{Here, $k\in\{1,\ldots,K\}$ indexes a transmitted stream and its scheduled user, whereas $u\in\{1,\ldots,N\}$ denotes a user identity in the full candidate pool.}
For each scheduled user $k$, we define the long-term channel second moment (spatial covariance) as
\[
\mathbf{R}_{k}\triangleq \mathbb{E}\!\left[\mathbf{h}_{k}\mathbf{h}_{k}^{\sf H}\right].
\]
Per-user DT-predicted covariances $\hat{\mathbf{R}}_u$, constructed from long-term DT path parameters without instantaneous CSI, are detailed in Section~II-D.
For user $k$, $y_{k}=\mathbf{h}_{k}^{\sf H}\mathbf{f}_{k}s_{k}+\sum_{j\neq k}\mathbf{h}_{k}^{\sf H}\mathbf{f}_{j}s_{j}+n_{k}$, yielding the signal-to-interference-plus-noise ratio (SINR)
\begin{equation}
    \gamma_{k} = \frac{|\mathbf{h}_{k}^{\sf H}\mathbf{f}_{k}|^2}{\sum_{j\neq {k}}|\mathbf{h}_{k}^{\sf H}\mathbf{f}_{j}|^2+\sigma^2}.
\end{equation}
The corresponding sum rate defines the objective in the following problem formulation.

\subsection{Problem Formulation}
\rev{We consider joint user-set and precoder design for sum-rate maximization with $N \gg K$ candidates. The BS serves $K$ users per slot, reflecting the available spatial degrees of freedom, feedback resources, and channel conditioning. Let $\mathcal K\subseteq\{1,\ldots,N\}$ be the scheduled set and $\mathbf F_{\mathcal K}=[\mathbf f_u]_{u\in\mathcal K}\in\mathbb C^{M\times K}$ its precoder.}

\begin{subequations} \label{eq:optimization}
\begin{align}
\underset{\mathcal K,\ \mathbf{F}_{\mathcal K}}{\text{maximize}} \quad
& \sum_{u\in\mathcal K} \log_2(1 + \gamma_{u}) \label{eq:optimization-a} \\
\text{subject to} \quad
& |\mathcal K| = K, \quad \sum_{u\in\mathcal K} \|\mathbf{f}_{u}\|^2 \leq P. \label{eq:optimization-b}
\end{align}
\end{subequations}
\rev{The joint problem is combinatorial because each candidate set induces a different interference-coupled channel matrix and hence a different optimal precoder~\cite{Dimic:2005,Bjornson:2014}. DiTUS addresses this by DT-based shortlisting ahead of any per-slot report, as detailed in Sec.~IV.}

\subsection{Duplexing and Online Resource Model}\label{sec:resource-model}
\rev{We adopt a frequency-division duplex (FDD) acquisition model based on New Radio (NR) CSI reference-signal (CSI-RS) measurement and feedback~\cite{3GPP:38211,3GPP:38214,Giordani:2019}. Candidate identities and locations are maintained on a slower control-plane timescale. Per scheduling instance, the BS broadcasts a common $r$-port beamformed reference signal with time--frequency cost $C_{\rm RS}(r)$ independent of the number of measuring users. Each $u\in\mathcal S$ returns $(b_u,Q_u,c_u)$ using $\lceil\log_2r\rceil+B_Q+1$ bits; after $\mathcal K$ is fixed, its $K$ users each return $B_g(r)$ bits for a quantized $r$-dimensional effective-channel vector. Thus, $K\le L$, $K\le N_s\le N$, and $K\le r\le M$, where $L$ is the per-slot effective-CSI acquisition budget, not a reference-signal-symbol count. Because first-stage reports are scalar, $N_s$ may exceed $L$ when scalar-feedback resources permit, but such points incur additional feedback and are not equal-total-overhead comparisons without a common bit mapping.}

The GP path-power label is a distinct slow-timescale resource: recovering a path-resolved power error requires calibrated sounding or sensing plus DT-to-path association and cannot be obtained from the standard effective-CSI report. We denote its measurement and feedback costs by $C_{\rm A}$ and $B_{\rm A}$ and report them separately.
DiTUS-L performs a single prescreening pass and does not refine the shortlist through additional CSI-feedback stages.

\subsection{Digital Twin Channel Modeling and Covariance Estimation}
For the scheduling problem considered here, the DT provides BS-side, geometry-aware priors on long-term channel statistics, rather than instantaneous CSI. Specifically, each candidate user $u$ is associated with a DT-predicted path set
\begin{equation}\label{eq:dt-pathset}
    \mathcal{P}_u^{\mathrm{DT}} = \{(\hat{\theta}_{u,\ell},\hat{p}_{u,\ell})\}_{\ell=1}^{\hat{L}_u},
\end{equation}
where $\hat{L}_u$ is the number of DT-predicted paths for user $u$ (the DT-side counterpart of the true path count $L_u$ in the geometric channel model~\eqref{eq:geo-channel}), and $\hat{\theta}_{u,\ell}$ and $\hat{p}_{u,\ell}$ denote the predicted azimuth AoD and average path power of the $\ell$-th path, respectively.

\subsubsection{Geometric Channel Model}
\rev{The physical channel vector $\mathbf{h}_u \in \mathbb{C}^{M \times 1}$ for candidate user $u \in \{1,\ldots,N\}$ is modeled as a superposition of $L_u$ propagation paths~\cite{Saleh:1987,3GPP:38901}:}
\begin{equation}\label{eq:geo-channel}
    \mathbf{h}_u = \sum_{\ell=1}^{L_u} \sqrt{p_{u,\ell}} e^{-j \phi_{u,\ell}} \mathbf{a}(\theta_{u,\ell}),
\end{equation}
where $p_{u,\ell}$, $\phi_{u,\ell}$, and $\theta_{u,\ell}$ denote the path power, phase, and azimuth AoD of the $\ell$-th path, respectively. In a wideband model, the path delay would induce a frequency-dependent phase rotation; at the single narrowband frequency considered here, this rotation is absorbed into $\phi_{u,\ell}$ and does not affect the spatial covariance. The array steering vector of the ULA with inter-element spacing \rev{$d=\lambda/2$}~\cite{Molisch:2004} is
\begin{equation}\label{eq:array-response}
\mathbf{a}(\theta)\triangleq \frac{1}{\sqrt{M}}
\big[e^{j2\pi (d/\lambda)m\sin\theta}\big]_{m=0}^{M-1}
\in\mathbb{C}^{M\times 1}.
\end{equation} \rev{Under the convention in~\eqref{rsModel:2}, a user's channel enters the received-signal model as $\mathbf h_u^{\sf H}$; hence both the DiTUS-P projection score in~\eqref{eq:prescreen-score} and construction of the effective-channel vector use the projection $\mathbf a(\theta)^{\sf H}\overline{\mathbf U}$.} While the phases $\phi_{u,\ell}$ vary rapidly, the structural parameters $\{\theta_{u,\ell}, p_{u,\ell}\}$ remain relatively stable over a longer coherence interval dependent on user mobility (typically hundreds of milliseconds to seconds)~\cite{Molisch:2004}.

\subsubsection{Digital Twin Imperfections}
The uncalibrated DT predictions are subject to estimation errors due to sensing inaccuracies and synchronization mismatch. We model these imperfections as
\begin{align}
    \hat{\theta}_{u,\ell} &= \theta_{u,\ell} + n_{u,\ell}^{\theta}, \label{eq:angle-noise} \\
    10\log_{10}(\hat{p}_{u,\ell}) &= 10\log_{10}(p_{u,\ell}) + n_{u,\ell}^{P}, \label{eq:power-noise}
\end{align}
where $n_{u,\ell}^{\theta}$ and $n_{u,\ell}^{P}$ denote the AoD and power errors, respectively. \rev{The additive model specifies how DT parameter mismatch enters the covariance construction while leaving the error distributions general.}

\subsubsection{AoD-Dependent Path-Power Error Correction via Gaussian Processes}
DT predictions can exhibit systematic path-power bias, for example because of imperfect ray-tracing calibration. We represent the predicted AoD by the circular feature $\mathbf{x}(\hat\theta)=[\cos\hat\theta,\sin\hat\theta]^{\sf T}/\theta_0$, where $\theta_0=\CfgGpAodScale^\circ=\pi/6$, and model the logarithmic power error $e(\mathbf{x})\triangleq10\log_{10}(p)-10\log_{10}(\hat p)$ as a GP:
\begin{equation}
    e(\mathbf{x}) \sim \mathcal{GP}\big(m(\mathbf{x}), k(\mathbf{x}, \mathbf{x}')\big),
\end{equation}
with mean function $m(\mathbf{x})=0$ and radial basis function (RBF) kernel $k(\mathbf{x},\mathbf{x}')=\sigma_f^2\exp(-\|\mathbf{x}-\mathbf{x}'\|^2/(2\ell^2))$. We use $\ell=\CfgGpLengthScale$ and unit kernel variance, $\sigma_f^2=1$, after normalizing the observed error labels by their sample standard deviation. The circular feature makes $-180^\circ$ and $180^\circ$ adjacent and keeps the kernel scale fixed across slots.

At the beginning of slot $t$, the GP may use only path-resolved labels acquired no later than slot $t-1$, denoted by $\mathcal A^{(t-1)}$. Each label $y_i$ is a calibrated path-power error acquired through the anchor resource of Section~\ref{sec:resource-model}. Let $s_{\mathcal A}$ be the sample standard deviation of the available labels, with $s_{\mathcal A}=1$ for a constant or empty set, and let $\tilde{\mathbf y}_{\mathcal A}=\mathbf y_{\mathcal A}/s_{\mathcal A}$. The normalized posterior mean for a candidate path is
\begin{equation}
    \tilde\mu_{\mathcal{A}^{(t-1)}}(\mathbf{x}^*) = \mathbf{k}_*^{\sf T} (\mathbf{K} + \sigma_n^2 \mathbf{I})^{-1} \tilde{\mathbf y}_{\mathcal{A}^{(t-1)}},
\end{equation}
where $\mathbf{K}$ is the Gram matrix of anchor AoD features, $\mathbf{k}_*$ is the covariance vector between the target feature $\mathbf{x}^*$ and the anchor features, and $\sigma_n^2=\CfgGpNoiseLevel$ is a dimensionless diagonal regularization parameter. Rescaling gives $\mu_{\mathcal{A}}=s_{\mathcal A}\tilde\mu_{\mathcal A}$ in decibels, and the DT path powers are corrected as $\hat{p}_{u,\ell}^{\mathrm{corr}}=\hat{p}_{u,\ell}\,10^{\mu_{\mathcal{A}^{(t-1)}}(\mathbf{x}(\hat{\theta}_{u,\ell}))/10}$.

\rev{To ensure causality, we initialize $\mathcal A^{(0)}=\emptyset$ and add labels collected after slot $t$ only to $\mathcal A^{(t)}$, so they first affect slot $t+1$. After each slot, a fixed number of scheduled users may be measured, and a finite sliding window limits the retained labels. Missing or unmatched paths are excluded. The evaluated GP assumes stable path association and is therefore not combined with path dropout. Away from anchors, the zero-mean GP reverts to the uncorrected DT prediction. The chronology is $\mathcal A^{(t-1)}\!\rightarrow$ GP $\rightarrow\mathcal S^{(t)}\!\rightarrow\mathcal K^{(t)}\!\rightarrow\mathcal A^{(t)}$.}

\subsubsection{Covariance Construction}
The corrected long-term covariance of user $u$ is constructed from the calibrated DT path parameters. Under the widely adopted uncorrelated scattering assumption~\cite{Bello:1963,Molisch:2004} (i.e., independent uniformly distributed phases for distinct propagation paths), the cross-terms vanish in the expectation. Thus, the spatial covariance is
\begin{equation}\label{eq:corrected-cov}
    \hat{\mathbf{R}}_u = \sum_{\ell=1}^{\hat{L}_u} \hat{p}_{u,\ell}^{\mathrm{corr}} \mathbf{a}(\hat{\theta}_{u,\ell}) \mathbf{a}(\hat{\theta}_{u,\ell})^\mathsf{H}.
\end{equation}

\section{Proposed Two-Stage Precoding Framework}
Building on~\cite{Lee:2017}, the outer beamformer is constructed from DT-inferred covariances rather than reference-signal-based covariance estimates, and DT-based shortlisting precedes CQI reporting. The structure uses NR-style beam-based training and feedback~\cite{Giordani:2019,3GPP:38214}.

The BS uses a slowly varying \emph{dominant spatial subspace} to define the reference beams. Let $\mathbf{R}=N^{-1}\sum_{u=1}^N\mathbf R_u$ denote the pool-average long-term channel covariance. DiTUS estimates this quantity from DT-derived per-user covariances without acquiring instantaneous CSI from all candidates. The eigendecomposition of $\mathbf{R}$ is
\begin{equation}
    \mathbf{R} = \mathbf{U} \mathbf{\Sigma} \mathbf{U}^{\sf H},
\end{equation}
where $\mathbf{U} \in \mathbb{C}^{M \times M}$ contains the eigenvectors and $\mathbf{\Sigma}=\operatorname{diag}(\lambda_1,\ldots,\lambda_M)$ contains eigenvalues sorted in descending order.
To capture the dominant spatial directions, we define a reduced eigenvector matrix (outer beamformer)
\begin{equation}
    \overline{\mathbf{U}} \triangleq [\mathbf{u}_1,\ldots,\mathbf{u}_r]\in\mathbb{C}^{M\times r},\quad \rev{K\le r\le M},
\end{equation}
which serves as the reference-beam matrix.

For candidate user $u$, the \emph{instantaneous effective CSI vector} is
\begin{equation}\label{eq:effective-csi-vector}
    \tilde{\mathbf h}_u \triangleq \overline{\mathbf U}^{\sf H}\mathbf h_u \in \mathbb C^r.
\end{equation}

To identify users whose channels are aligned with the dominant eigenspace, we define a \textit{selection cone} for each reference beam $\mathbf{u}_i$ as
\begin{equation}\label{eq:cone-def}
    \mathcal{C}_{i} = \left\{ \mathbf{h}: i = \arg\max_{m}|\mathbf{h}^{\sf H} \mathbf{u}_{m}|^2,\ \tfrac{|\mathbf{h}^{\sf H} \mathbf{u}_{i}|^2}{\|\mathbf{h}^{\sf H}\overline{\mathbf{U}}\|^2} \geq \alpha_{\rm cone} \right\},
\end{equation}
for $i=1,\ldots,r$, where $\alpha_{\rm cone}\in[1/2,1)$ is the cone threshold. Ties in $\arg\max$ are resolved by the smallest beam index, including the $\alpha_{\rm cone}=1/2$ boundary, so every user is assigned to at most one cone.

\begin{lemma}[Cone condition and pairwise semi-orthogonality~\cite{Lee:2017}]\label{prop:cone-zf}
Let $\alpha_{\rm cone}\in[1/2,1)$. For any effective dimension $r\ge 2$, if users $k,\ell$ are assigned to distinct cones $i\ne j$, their normalized effective-channel vectors $\tilde{\mathbf{h}}_k$ and $\tilde{\mathbf{h}}_\ell$ satisfy
\begin{equation}\label{eq:cone-inner}
\frac{|\tilde{\mathbf{h}}_k^{\sf H}\tilde{\mathbf{h}}_\ell|^2}{\|\tilde{\mathbf{h}}_k\|^2\|\tilde{\mathbf{h}}_\ell\|^2}
\le 4\alpha_{\rm cone}(1-\alpha_{\rm cone}),
\end{equation}
and the bound is attainable by channels supported on the two corresponding cone coordinates.
\end{lemma}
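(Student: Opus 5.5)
The plan is to reduce everything to unit vectors in $\mathbb{C}^r$ and then turn the bound into a two-angle problem. The cone conditions in \eqref{eq:cone-def} depend on $\mathbf{h}$ only through $\tilde{\mathbf h}=\overline{\mathbf U}^{\sf H}\mathbf h$. They are also invariant to scaling, since $|\mathbf h^{\sf H}\mathbf u_m|^2=|\tilde h_m|^2$ and $\|\mathbf h^{\sf H}\overline{\mathbf U}\|^2=\|\tilde{\mathbf h}\|^2$. So I would set $\mathbf x=\tilde{\mathbf h}_k/\|\tilde{\mathbf h}_k\|$ and $\mathbf y=\tilde{\mathbf h}_\ell/\|\tilde{\mathbf h}_\ell\|$, both unit-norm, and assume the users lie in cones $i\neq j$. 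Membership then gives $|x_i|^2\ge\alpha_{\rm cone}$ and $|y_j|^2\ge\alpha_{\rm cone}$. Because $\|\mathbf y\|=1$ and $i\ne j$, it also gives $|y_i|^2\le 1-|y_j|^2\le 1-\alpha_{\rm cone}$. The target is $|\mathbf x^{\sf H}\mathbf y|\le 2\sqrt{\alpha_{\rm cone}(1-\alpha_{\rm cone})}$.

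The key step is to split the inner product into coordinate $i$ and its complement and apply Cauchy--Schwarz to the complement:
\[
|\mathbf x^{\sf H}\mathbf y|\le |x_i||y_i|+\sqrt{1-|x_i|^2}\,\sqrt{1-|y_i|^2}.
\]
Write $|x_i|=\cos\phi$ and $|y_i|=\cos\psi$ with $\phi,\psi\in[0,\pi/2]$. The right-hand side is then exactly $\cos(\phi-\psi)$. The constraints become $\phi\le\phi_0\triangleq\arccos\sqrt{\alpha_{\rm cone}}$ and $\psi\ge\psi_0\triangleq\arccos\sqrt{1-\alpha_{\rm cone}}$.

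The role of $\alpha_{\rm cone}\ge 1/2$ is that $\sqrt{\alpha_{\rm cone}}\ge\sqrt{1-\alpha_{\rm cone}}$, so $\phi\le\phi_0\le\psi_0\le\psi$. Hence $\psi-\phi\ge\psi_0-\phi_0\ge 0$, and the whole difference lies in $[0,\pi/2]$. Cosine is decreasing on this interval, so
\[
\cos(\phi-\psi)\le\cos(\psi_0-\phi_0)=\sqrt{\alpha_{\rm cone}}\sqrt{1-\alpha_{\rm cone}}+\sqrt{1-\alpha_{\rm cone}}\sqrt{\alpha_{\rm cone}}=2\sqrt{\alpha_{\rm cone}(1-\alpha_{\rm cone})}.
\]
Squaring this gives \eqref{eq:cone-inner}. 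The condition $r\ge2$ is needed only so that two distinct cones exist.

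For attainability, take $\mathbf x=\sqrt{\alpha_{\rm cone}}\,\mathbf e_i+\sqrt{1-\alpha_{\rm cone}}\,\mathbf e_j$ and $\mathbf y=\sqrt{1-\alpha_{\rm cone}}\,\mathbf e_i+\sqrt{\alpha_{\rm cone}}\,\mathbf e_j$, which are supported on the two cone coordinates. Realize them as channels via $\mathbf h=\overline{\mathbf U}\mathbf x$, which works because $\overline{\mathbf U}$ has orthonormal columns. Then $|\mathbf x^{\sf H}\mathbf y|^2=4\alpha_{\rm cone}(1-\alpha_{\rm cone})$. For $\alpha_{\rm cone}>1/2$ the argmax is strict and both cone memberships hold.

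The main obstacle I expect is the boundary $\alpha_{\rm cone}=1/2$. There the example gives $\mathbf x=\mathbf y$, both users tie and would go to the same (smaller-index) cone. My plan is to treat this as a supremum: perturb toward $\mathbf e_i$ for $\mathbf x$ and toward $\mathbf e_j$ for $\mathbf y$, making the bound $1$ approached but not achieved. Alternatively, I would state attainment for $\alpha_{\rm cone}\in(1/2,1)$ and note the limit case.
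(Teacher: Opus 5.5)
Your proof is correct and follows the same Cauchy--Schwarz route the paper uses; the paper's own proof is only a pointer to Lemma~2 of the cited two-stage scheduling work, and your argument fills in what it omits: the split of coordinate $i$ versus its complement, the bound $|y_i|^2\le 1-\alpha_{\rm cone}$, and the angle-monotonicity step where $\alpha_{\rm cone}\ge 1/2$ enters. Your boundary remark is also right: at $\alpha_{\rm cone}=1/2$, equality would force $\tilde{\mathbf h}_k\parallel\tilde{\mathbf h}_\ell$, so both users have identical magnitude profiles and land in the same cone under smallest-index tie-breaking; hence the attainability clause holds only for $\alpha_{\rm cone}\in(1/2,1)$, and at the endpoint the value $1$ is a supremum that is not attained.
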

\begin{IEEEproof}
Applying the Cauchy--Schwarz argument of Lemma~2 in~\cite{Lee:2017} yields~\eqref{eq:cone-inner}.
\end{IEEEproof}
\rev{By Lemma~\ref{prop:cone-zf}, assigning users to distinct cones controls pairwise effective-channel correlation.} The bound $4\alpha_{\rm cone}(1-\alpha_{\rm cone})$ decreases on $[1/2,1]$, so a larger threshold tightens the pairwise-correlation bound at the cost of fewer cone-qualified users.

Let $\hat{\mathbf h}_u$ denote the channel estimate obtained from the common reference signal. A shortlisted user locally forms $\hat{\mathbf g}_u\triangleq\hat{\mathbf h}_u^{\sf H}\overline{\mathbf U}$ and reports its most-aligned beam
\begin{equation}
    b_u = \arg\max_{i\in\{1,\ldots,r\}} |\hat g_{u,i}|^2,
\end{equation}
along with a quasi-SINR metric based on its estimated effective-channel energy and a one-bit cone-admissibility indicator $c_u$
\begin{equation}\label{eq:qk-single}
    Q_u := \frac{\|\hat{\mathbf g}_u\|^2}{\Delta},\quad \Delta\triangleq \frac{K\sigma^2}{P},
\end{equation}
where $\Delta$ is the normalized noise term under equal power allocation across $K$ streams. \rev{Equivalently, $Q_u=\rho_{\mathrm{tx}}\|\hat{\mathbf g}_u\|^2$, where $\rho_{\mathrm{tx}}=P/(K\sigma^2)$, so $Q_u$ is proportional to the per-stream post-beamforming SNR.}
Here $c_u=\mathbf 1\{|\hat g_{u,b_u}|^2/\|\hat{\mathbf g}_u\|^2\ge\alpha_{\rm cone}\}$.

At the BS, cone-qualified users are first pooled by reported beam index $b_u$; for each beam $i$ the BS selects
\begin{equation}
    \kappa_i = \arg\max_{u\in\mathcal{W}_i} Q_u,\qquad i\in\mathcal I_{\rm occ},
\end{equation}
where $\mathcal{W}_i$ is the set of users with $b_u=i$ and $c_u=1$, and $\mathcal I_{\rm occ}\triangleq\{i:\mathcal W_i\neq\emptyset\}$ is the set of populated cones. The BS retains at most the top-$K$ users among $\{\kappa_i\}_{i\in\mathcal I_{\rm occ}}$ based on $Q_{\kappa_i}$. If fewer than $K$ users are obtained after per-beam competition, the remaining positions are filled from the other shortlisted users using descending reported $Q$; no candidate's effective-channel vector is available to or used by the scheduler.

In the second feedback stage, each scheduled user reports its instantaneous effective CSI, from which the BS computes the weighted minimum mean-square error (WMMSE) inner precoder for interference mitigation~\cite{Shi:2011}. Thus, instantaneous CQI refines the DT-based shortlist before effective-channel vectors are acquired.

\subsection{Second Stage: Precoding}\label{sec:second-stage}
We employ two-stage precoding to mitigate inter-user interference~\cite{Lee:2017}. \rev{The reference-beam matrix $\overline{\mathbf{U}}\in\mathbb{C}^{M\times r}$, $r\ge K$, serves as the outer precoder, and the inner precoder $\mathbf{V}\in\mathbb{C}^{r\times K}$ is designed from instantaneous effective CSI, yielding the full precoder $\mathbf{F}_{\mathrm{ts}}=\overline{\mathbf{U}}\mathbf{V}$.} With $\tilde{\mathbf{H}}\triangleq\mathbf{H}\overline{\mathbf{U}}\in\mathbb{C}^{K\times r}$, \eqref{rsModel:2} becomes $\mathbf{y}=\tilde{\mathbf{H}}\mathbf{V}\mathbf{s}+\mathbf{n}$. The inner precoder maximizes the sum spectral efficiency subject to the total power constraint:
\begin{equation}\label{eq:wmmse-problem}
\rev{\max_{\mathbf{V}:\ \|\overline{\mathbf{U}}\mathbf{V}\|_F^2\le P} \sum_{k=1}^K \log_2(1 + \gamma_k),}
\end{equation}
where $\gamma_k$ is the SINR of user $k$ with effective-channel vector $\tilde{\mathbf{h}}_k$. A stationary point of this nonconvex problem is obtained via the iterative WMMSE algorithm~\cite{Shi:2011}, which alternately updates the scalar receive equalizer $a_k$, MSE weight $\nu_k=e_k^{-1}$, and the precoder:
\begin{equation}\label{eq:wmmse-update}
\mathbf{V} = \left(\sum_{k=1}^K \nu_k |a_k|^2 \tilde{\mathbf{h}}_k \tilde{\mathbf{h}}_k^{\sf H} + \lambda \mathbf{I}_r\right)^{-1} \left[ \nu_1 a_1^* \tilde{\mathbf{h}}_1, \dots, \nu_K a_K^* \tilde{\mathbf{h}}_K \right],
\end{equation}
where $\lambda$ is chosen to satisfy the power constraint. Since $\overline{\mathbf U}^{\sf H}\overline{\mathbf U}=\mathbf I_r$, $\|\overline{\mathbf U}\mathbf V\|_F=\|\mathbf V\|_F$; for weighted sum-rate maximization, set $\nu_k=w_k(t)e_k^{-1}$, where $w_k(t)$ is the PF weight defined in Section~V.
\rev{The following proposition upper-bounds the loss incurred when an interference-free, equal-power surrogate is restricted to the $r$-dimensional subspace. The optimized WMMSE performance is assessed numerically in Section~VI.}

\begin{proposition}[Interference-free subspace-loss surrogate]\label{prop:rate-gap}
\rev{Let $R_{\rm full}^{\rm IF}=\sum_k\log_2(1+\rho_{\mathrm{tx}}\|\mathbf h_k\|^2)$ and $R_{\rm sub}^{\rm IF}=\sum_k\log_2(1+\rho_{\mathrm{tx}}\|\overline{\mathbf U}^{\sf H}\mathbf h_k\|^2)$ be equal-power, interference-free surrogate rates, where $\rho_{\mathrm{tx}}=P/(K\sigma^2)$. With $\epsilon_k\triangleq \|(\mathbf{I}_M-\overline{\mathbf{U}}\overline{\mathbf{U}}^{\sf H})\mathbf{h}_k\|^2$,}
\begin{equation}\label{eq:rate-gap-bound}
R_{\rm full}^{\rm IF}-R_{\rm sub}^{\rm IF} \le \sum_{k=1}^{K}\log_2\!\Big(1+\rho_{\mathrm{tx}}\,\epsilon_k\Big).
\end{equation}
\end{proposition}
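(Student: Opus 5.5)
The plan is to prove the bound term by term and then sum over $k$. Since $\overline{\mathbf U}$ has orthonormal columns, $\mathbf P\triangleq\overline{\mathbf U}\overline{\mathbf U}^{\sf H}$ is an orthogonal projector. For each $k$, the vector $\mathbf h_k$ splits into two orthogonal pieces, $\mathbf P\mathbf h_k$ and $(\mathbf I_M-\mathbf P)\mathbf h_k$. By Pythagoras,
\begin{equation*}
\|\mathbf h_k\|^2=\|\mathbf P\mathbf h_k\|^2+\epsilon_k=\|\overline{\mathbf U}^{\sf H}\mathbf h_k\|^2+\epsilon_k .
\end{equation*}
The last equality uses $\|\overline{\mathbf U}\overline{\mathbf U}^{\sf H}\mathbf h_k\|^2=\mathbf h_k^{\sf H}\overline{\mathbf U}\,\overline{\mathbf U}^{\sf H}\overline{\mathbf U}\,\overline{\mathbf U}^{\sf H}\mathbf h_k=\|\overline{\mathbf U}^{\sf H}\mathbf h_k\|^2$, which holds because $\overline{\mathbf U}^{\sf H}\overline{\mathbf U}=\mathbf I_r$. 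Writing $a_k\triangleq\rho_{\mathrm{tx}}\|\overline{\mathbf U}^{\sf H}\mathbf h_k\|^2\ge 0$ and $b_k\triangleq\rho_{\mathrm{tx}}\epsilon_k\ge 0$, the $k$-th term of $R_{\rm full}^{\rm IF}-R_{\rm sub}^{\rm IF}$ becomes
\begin{equation*}
\log_2(1+a_k+b_k)-\log_2(1+a_k)=\log_2\!\Big(1+\frac{b_k}{1+a_k}\Big).
\end{equation*}

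The key inequality is elementary. Because $a_k\ge 0$, we have $b_k/(1+a_k)\le b_k$, and $\log_2$ is monotone, so each term is at most $\log_2(1+\rho_{\mathrm{tx}}\epsilon_k)$. An equivalent route is subadditivity: $(1+a)(1+b)\ge 1+a+b$ for $a,b\ge 0$. Summing over $k=1,\ldots,K$ then gives the stated bound \eqref{eq:rate-gap-bound}.

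No step is a genuine obstacle. The only point needing care is the norm identity $\|\mathbf P\mathbf h_k\|=\|\overline{\mathbf U}^{\sf H}\mathbf h_k\|$, which relies on the columns of $\overline{\mathbf U}$ being orthonormal. This holds because they are eigenvectors of the Hermitian matrix $\mathbf R$. If $\overline{\mathbf U}$ were instead built from the DT estimate of $\mathbf R$, the argument is unchanged, since the eigenvectors of any Hermitian matrix can be chosen orthonormal.

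It is also worth recording when the bound is tight. Equality holds exactly when $\epsilon_k=0$ or $a_k=0$ for every $k$. The bound therefore shrinks to zero as the out-of-subspace energy $\epsilon_k$ vanishes, for instance when $\mathbf h_k$ lies in the span of the top $r$ eigenvectors.
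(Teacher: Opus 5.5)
Your proof is correct and follows essentially the same route as the paper's proof in Appendix~\ref{app:prop2}: the Pythagorean split $\|\mathbf h_k\|^2=\|\overline{\mathbf U}^{\sf H}\mathbf h_k\|^2+\epsilon_k$, obtained from the orthogonal projector $\overline{\mathbf U}\overline{\mathbf U}^{\sf H}$, then the per-user inequality $(1+a+b)/(1+a)\le 1+b$, then summation over $k$. The paper goes on to take expectations and apply Jensen's inequality to relate the bound to the residual eigenvalue mass, but that step lies beyond the statement itself.
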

\begin{IEEEproof}
See Appendix~\ref{app:prop2}.
\end{IEEEproof}
\rev{Together, Lemma~\ref{prop:cone-zf} and Proposition~\ref{prop:rate-gap} provide design guidance for cone-based user separation and subspace-rank selection, respectively. The WMMSE inner precoder mitigates residual multiuser interference, and the resulting interference-coupled performance is evaluated numerically.}

\begin{remark}[Rank selection guideline]\label{rem:rank}
\rev{Under the covariance-matching condition $\mathbf R_{\mathcal K}=\mathbf R$ used in Appendix~\ref{app:prop2}, the expected surrogate bound is controlled by the residual eigenvalue mass outside $\operatorname{span}(\overline{\mathbf U})$. Define $\eta(r)\triangleq\sum_{i=1}^{r}\lambda_i/\sum_i\lambda_i$, $r_\eta\triangleq\min\{r:\eta(r)\ge\eta_0\}$, and the adaptive rank $r_{\rm adapt}\triangleq\min\{r_{\max},\max(K,r_\eta)\}$. The antenna-scaling experiment in Section~VI applies this rule to the pool-average predicted DT covariance with $\eta_0=0.9$ and $r_{\max}=M$. This heuristic controls captured covariance energy rather than the optimized WMMSE rate gap.}
\end{remark}

\section{Proposed User Scheduling with Digital Twin}
We propose a DT-aided framework for scheduling dense MU-MIMO systems with $N\gg K$. DT-inferred long-term channel statistics define a scheduling-relevant spatial subspace and prescreen the candidate pool before instantaneous feedback is requested. Consequently, only $N_s$ candidates return scalar CQI, and effective-channel vectors are acquired only from the final $K$ scheduled users rather than from all $N$ candidates. \rev{Table~\ref{tab:notation} summarizes the key symbols.}

\begin{table}[!t]
\centering
\caption{\rev{Summary of key notation used in Section~IV.}}
\label{tab:notation}
\rev{\scriptsize
\setlength{\tabcolsep}{3pt}
\renewcommand{\arraystretch}{0.95}
\begin{tabular}{@{}ll@{}}
\toprule
Symbol & Meaning \\
\midrule
$N,K,N_s,M,r$ & pool, scheduled, shortlist, antennas, rank \\
$L$ & effective-CSI acquisition budget \\
$\overline{\mathbf{U}}$ & reference beamformer (eigen-beams of $\hat{\mathbf{R}}$) \\
$\hat{\mathbf{R}}_u,\hat{\mathbf{R}}$ & user-$u$ and pool-average DT covariances \\
$\hat{\mathbf{h}}_u,\hat{\mathbf{g}}_u$ & estimated channel/effective-channel vectors \\
$\hat\theta_{u,\ell},\hat p_{u,\ell}^{\mathrm{corr}}$ & DT-predicted AoD and GP-corrected path power \\
$b_u,Q_u,c_u$ & beam index, quasi-SINR CQI, cone indicator \\
$\alpha_{\rm cone},\alpha_{\rm SUS},N_{\rm cand}$ & cone threshold, SUS threshold, log-det candidate count \\
$\mathcal{S}^{(t)},\mathcal{K}^{(t)}$ & shortlist and scheduled set in slot $t$ \\
\bottomrule
\end{tabular}}
\end{table}

\subsection{Covariance Construction from DT Long-Term Parameters}
DT assistance is used to construct the dominant subspace $\overline{\mathbf{U}}$ and to score users without relying on instantaneous CSI.
From the DT path set $\mathcal{P}_u^{\mathrm{DT}}$ in~\eqref{eq:dt-pathset}, the covariance construction uses the long-term parameters $\{\hat\theta_{u,\ell},\hat p_{u,\ell}^{\mathrm{corr}}\}_{\ell=1}^{\hat{L}_u}$, where $\hat p_{u,\ell}^{\mathrm{corr}}$ denotes the GP-corrected average path power.
We use the per-user statistical covariance $\hat{\mathbf{R}}_u$ defined in~\eqref{eq:corrected-cov} (with summation up to $\hat{L}_u$), which follows from the narrowband geometric channel model in~\eqref{eq:geo-channel} under the same uncorrelated-scattering assumption.
We also form an aggregated cell covariance
\begin{equation}\label{eq:cell-cov}
\hat{\mathbf{R}}\triangleq \frac{1}{N}\sum_{u=1}^{N}\hat{\mathbf{R}}_u.
\end{equation}
Here, $N$ denotes the candidate-pool size within the scheduling instance. The factor $1/N$ affects only the overall scale and does not change the eigenvectors, hence it does not affect the reference-beam directions. We use the pool-average $\hat{\mathbf{R}}$ to obtain a \emph{single} set of reference beams shared across candidates, which enables a common reference-signal and feedback structure in the instantaneous-CQI stage. DT paths without valid parameters are omitted from the covariance sum.
These statistics require no per-candidate instantaneous CSI. The BS updates the user covariances, pool covariance, and dominant eigenspace on the slower control-plane timescale when the candidate set, DT paths, or GP anchors change, and reuses them between updates. Shortlist CQI, scheduled-user effective CSI, and the inner WMMSE precoder are updated every slot.
The reference-beam matrix $\overline{\mathbf{U}}$ is chosen as the $r$ dominant eigenvectors of $\hat{\mathbf{R}}$ (equivalently, the dominant eigen-beams of an AoD--power weighted covariance). The pair $(\hat{\mathbf{R}}_u,\overline{\mathbf{U}})$ serves as the input to the prescreening score introduced in Section~\ref{sec:dt-prescreen}.

\subsection{DiTUS-P: Projection-Score Prescreening and Two-Stage Feedback}\label{sec:dt-prescreen}
We first introduce \textbf{DiTUS-P}, the low-complexity projection-score variant of DiTUS. DiTUS-P uses DT-inferred statistical CSI for outer precoding and prescreening, and then uses lightweight CQI to refine selection. Instantaneous \emph{effective} CSI is requested only from the final \(K\) users.
\begin{definition}[DiTUS-P projection-score prescreening]\label{def:dt-prescreen}
For each candidate user $u\in\{1,\ldots,N\}$, the DT outputs the long-term path set $\mathcal{P}_u^{\mathrm{DT}}$ in~\eqref{eq:dt-pathset}. The BS uses only the AoDs and GP-corrected average path powers from this set to form a predicted long-term user covariance $\hat{\mathbf{R}}_u\in\mathbb{C}^{M\times M}$. With the common reference-beam matrix $\overline{\mathbf{U}}$ formed as the $r$ dominant eigenvectors of the pool-average covariance $\hat{\mathbf{R}}$, the BS assigns the prescreening score
\begin{equation}\label{eq:prescreen-score}
    s_u \triangleq \mathrm{tr}\!\left(\overline{\mathbf{U}}^{\sf H}\hat{\mathbf{R}}_u\,\overline{\mathbf{U}}\right)
    = \sum_{\ell}\hat p_{u,\ell}^{\mathrm{corr}}\|\mathbf{a}(\hat\theta_{u,\ell})^{\sf H}\overline{\mathbf{U}}\|^2,
\end{equation}
    which is the expected projection energy of user $u$ onto the dominant subspace under the predicted covariance model, and selects the candidate subset
\begin{equation}\label{eq:prescreen-set}
\mathcal{S}\triangleq \arg\max_{\mathcal{T}\subseteq\{1,\ldots,N\}:|\mathcal{T}|=N_s}\ \sum_{u\in\mathcal{T}} s_u,
\end{equation}
which is equivalently obtained by choosing the top-$N_s$ users according to $s_u$.
\end{definition}

Under the predicted covariance model, the score~\eqref{eq:prescreen-score} equals the expected effective-channel energy under the outer beamformer, $s_u = \mathbb{E}[\|\mathbf{h}_u^{\sf H}\overline{\mathbf{U}}\|^2]$. Under equal power allocation, it is proportional to an expected per-stream SNR contribution. Top-$N_s$ ranking therefore retains users with the largest individual energy proxies, but it does not by itself optimize multi-user spatial compatibility.

The following proposition bounds the expected single-user MF rate of discarded users.

\begin{proposition}[Prescreening-score bound on the expected single-user matched-filter rate]\label{prop:prescreen-loss}
\rev{Under the covariance model used for prescreening, $\mathbb{E}[\mathbf{h}_u\mathbf{h}_u^{\sf H}]=\hat{\mathbf{R}}_u$. Let $R_u^{\mathrm{MF}}\triangleq \log_2(1+\rho_{\mathrm{tx}}\|\mathbf{h}_u^{\sf H}\overline{\mathbf{U}}\|^2)$ denote the single-user matched-filter (MF) rate after outer beamforming, and let $s_{(N_s)}$ denote the $N_s$-th largest prescreening score among the $N$ candidate users. For any user $u$ not included in the prescreened set $\mathcal{S}$ (i.e., $s_u\le s_{(N_s)}$), the expected single-user MF rate satisfies}
\begin{equation}\label{eq:prescreen-bound}
\mathbb{E}\!\big[R_u^{\mathrm{MF}}\big]
\le \log_2\!\big(1+\rho_{\mathrm{tx}}\,s_{(N_s)}\big).
\end{equation}
Consequently, the sum of the expected individual MF rates of the discarded users is at most
\begin{equation}\label{eq:total-loss}
\sum_{u\notin\mathcal{S}} \mathbb{E}[R_u^{\mathrm{MF}}]
\le (N-N_s)\,\log_2\!\big(1+\rho_{\mathrm{tx}}\,s_{(N_s)}\big).
\end{equation}
\end{proposition}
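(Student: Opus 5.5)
The argument is Jensen's inequality followed by monotonicity of the logarithm. We first identify the expected effective-channel energy with the prescreening score. Under the stated covariance model $\mathbb{E}[\mathbf h_u\mathbf h_u^{\sf H}]=\hat{\mathbf R}_u$, we have $\|\mathbf h_u^{\sf H}\overline{\mathbf U}\|^2=\mathrm{tr}(\overline{\mathbf U}^{\sf H}\mathbf h_u\mathbf h_u^{\sf H}\overline{\mathbf U})$. Linearity of trace and expectation then gives
\begin{equation*}
\mathbb{E}\big[\|\mathbf h_u^{\sf H}\overline{\mathbf U}\|^2\big]=\mathrm{tr}\big(\overline{\mathbf U}^{\sf H}\hat{\mathbf R}_u\overline{\mathbf U}\big)=s_u,
\end{equation*}
which is exactly~\eqref{eq:prescreen-score}. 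Only the second moment is needed here; no Gaussianity or independence of path phases is used beyond what is already encoded in $\hat{\mathbf R}_u$.

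Next, the map $x\mapsto\log_2(1+\rho_{\mathrm{tx}}x)$ is concave on $x\ge 0$ because $\rho_{\mathrm{tx}}>0$. The random variable $X_u=\|\mathbf h_u^{\sf H}\overline{\mathbf U}\|^2$ is nonnegative with finite mean $s_u$. Jensen's inequality therefore yields
\begin{equation*}
\mathbb{E}[R_u^{\mathrm{MF}}]\le\log_2(1+\rho_{\mathrm{tx}}s_u).
\end{equation*}

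For a discarded user $u\notin\mathcal S$, the top-$N_s$ characterization of~\eqref{eq:prescreen-set} gives $s_u\le s_{(N_s)}$. This holds even under ties: any user outside a top-$N_s$ set cannot exceed the $N_s$-th largest score, otherwise swapping it in would increase the objective. Since the logarithm is increasing, \eqref{eq:prescreen-bound} follows. Summing this per-user bound over the $N-N_s$ users in the complement of $\mathcal S$ gives~\eqref{eq:total-loss}.

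There is no real obstacle. The only points needing care are justifying the trace/expectation interchange (finite second moment, which is implicit in $\hat{\mathbf R}_u$ being well defined) and making the tie-breaking argument for $s_u\le s_{(N_s)}$ explicit.
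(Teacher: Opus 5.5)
Your proposal is correct and follows essentially the same route as the paper's proof: Jensen's inequality on the concave map $x\mapsto\log_2(1+\rho_{\mathrm{tx}}x)$, the identity $\mathbb{E}[\|\mathbf h_u^{\sf H}\overline{\mathbf U}\|^2]=\mathrm{tr}(\overline{\mathbf U}^{\sf H}\hat{\mathbf R}_u\overline{\mathbf U})=s_u$, the top-$N_s$ rule giving $s_u\le s_{(N_s)}$, and summation over the $N-N_s$ discarded users. The paper additionally shows that $R_u^{\mathrm{MF}}$ dominates the rate of any unit-norm beam, which justifies the name ``matched filter'' but is not needed for the inequality; your explicit tie-handling argument is a detail the paper leaves implicit.
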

\begin{IEEEproof}
See Appendix~\ref{app:thm1}.
\end{IEEEproof}

\begin{remark}[Practical implication]
The bound in~\eqref{eq:total-loss} is most informative when $(N-N_s)\log_2(1+\rho_{\mathrm{tx}}s_{(N_s)})$ is small; even small $s_{(N_s)}$ may give a loose bound when $N-N_s$ is large.
\end{remark}

\begin{remark}[Scope of the bound]
\rev{Proposition~\ref{prop:prescreen-loss} upper-bounds the expected single-user MF rate of discarded users under the assumed covariance model. Cone-based CQI selection and, for DiTUS-L, aggregate-covariance selection then promote spatial compatibility before the WMMSE inner precoder mitigates residual multiuser interference.}
\end{remark}

\rev{After prescreening, only users in $\mathcal{S}$ estimate their effective-channel vectors and report the scalar tuple $(b_u,Q_u,c_u)$ specified in~\eqref{eq:qk-single} and the surrounding text; the vectors themselves are requested only from the final $K$ users. Algorithm~\ref{alg:ditus} summarizes DiTUS-P. DiTUS-L replaces its projection-score prescreening step with the greedy objective in~\eqref{eq:new-logdet-objective}. The optional GP-anchor update applies to either variant; the causal-GP evaluation in Section~\ref{sec:causal-gp} uses DiTUS-L.}

\begin{algorithm}[t]
\caption{DiTUS-P User Scheduling \& Precoding}
\label{alg:ditus}
\SetAlgoLined
\KwIn{$\mathcal{D}_{\mathrm{DT}}=\{\mathcal{P}_u^{\mathrm{DT}}\}_{u=1}^{N}$; path-power calibration measurements $\mathcal A^{(t-1)}$ available before slot $t$; $N$ candidate users}
\KwOut{$\overline{\mathbf{U}}^{(t)},\mathbf{V}^{(t)},\mathcal K^{(t)}$}
\textbf{Slot $t$:} calibrate the DT path powers using only $\mathcal A^{(t-1)}$\;
\For{$u\in\{1,\ldots,N\}$}{
Form $\hat{\mathbf{R}}_u^{(t)}$ from the corrected path set\;
}
Set $\hat{\mathbf{R}}^{(t)}=N^{-1}\sum_u\hat{\mathbf{R}}_u^{(t)}$ and let $\overline{\mathbf{U}}^{(t)}$ contain its $r$ dominant eigenvectors\;
\For{$u\in\{1,\ldots,N\}$}{
Compute $s_u^{(t)}=\mathrm{tr}((\overline{\mathbf{U}}^{(t)})^{\sf H}\hat{\mathbf{R}}_u^{(t)}\overline{\mathbf{U}}^{(t)})$\;
}
\textbf{Prescreen:} select top-$N_s$ users $\mathcal{S}^{(t)}$ by $s_u^{(t)}$ (Def.~\ref{def:dt-prescreen})\;
\For{$u\in\mathcal{S}^{(t)}$}{
User estimates $\hat{\mathbf{g}}_u^{(t)}=\hat{\mathbf{h}}_u^{\sf H}\overline{\mathbf{U}}^{(t)}$ from the beamformed reference signal\;
Compute $b_u$, quasi-SINR $Q_u$, and the one-bit cone-admissibility indicator $c_u$; report $(b_u,Q_u,c_u)$\;
}
\textbf{Schedule:} per beam $i$, pick the cone-qualified user with largest $Q$; keep at most the top-$K$ winners and fill remaining positions by descending $Q$ over the other reports to obtain $\mathcal K^{(t)}$\;
\textbf{Effective-CSI acquisition and precoding:} acquire effective CSI from $\mathcal K^{(t)}$; compute $\mathbf V^{(t)}$ by WMMSE\;
\textbf{Optional anchor update:} after scheduling, add separate path-power calibration measurements from eligible users in $\mathcal K^{(t)}$ and make them available beginning in slot $t+1$\;
\Return $\overline{\mathbf{U}}^{(t)},\mathbf{V}^{(t)},\mathcal K^{(t)}$\;
\end{algorithm}

\subsection{Log-det Variant: DiTUS-L}
We introduce \textbf{DiTUS-L} (Log-det) as a covariance-aware alternative to DiTUS-P. It replaces projection-score ranking with greedy pure log-det prescreening while retaining the same two-stage feedback protocol: the $N_s$ shortlisted users each return one scalar report $(b_u,Q_u,c_u)$, the BS selects the final $K$ users from those reports, and only those $K$ users subsequently provide effective-channel vectors for inner precoding.

\subsubsection{Step 1: DT pure log-det prescreening}
Let $\overline{\mathbf{U}}\in\mathbb{C}^{M\times r}$ be the DT reference-beam matrix and define the beam-domain covariance of user $u$:
\begin{equation}
\mathbf{C}_u \triangleq \overline{\mathbf{U}}^{\sf H}\hat{\mathbf{R}}_u\overline{\mathbf{U}}\in\mathbb{C}^{r\times r}.
\end{equation}
To limit complexity for large $N$, DiTUS-L restricts greedy optimization to the fixed candidate set
\begin{equation}
\mathcal V_{\rm cand}
\triangleq \operatorname{Top}_{N_{\rm cand}}(\{1,\ldots,N\};s_u),
\qquad
N_{\rm cand}=\min\{N,\CfgCandidateMult N_s\},
\label{eq:logdet-candidate-set}
\end{equation}
using the projection score in~\eqref{eq:prescreen-score}. DiTUS-L then solves the aggregate-covariance log-det selection problem
\begin{equation}\label{eq:new-logdet-objective}
\max_{\mathcal{T}\subseteq\mathcal V_{\rm cand},\,|\mathcal{T}|\le N_s}
f(\mathcal T),\qquad
f(\mathcal T)\triangleq
\log\det\!\Big(\mathbf{I}_r+\gamma\!\sum_{u\in\mathcal{T}}\mathbf{C}_u\Big),
\end{equation}
where $\gamma=\rho_{\mathrm{tx}}=\rho/K$ and $\rho=P/\sigma^2=10^{\mathrm{SNR}/10}$; the simulations normalize $\sigma^2$ to one. If the retained users are viewed as a virtual multiple-access channel (MAC) with beam-domain covariances $\{\gamma\mathbf C_u\}$, $f$ is the corresponding log-det rate expression~\cite{Cover:2006}; maximizing it favors jointly strong covariances that span distinct spatial directions. The physical-channel and DT-covariance powers use the same pool-level normalization. Accordingly, the common rescaling $\mathbf C_u'=c\mathbf C_u$, $\gamma'=\gamma/c$, leaves $f$ unchanged.

\begin{lemma}[Monotone submodularity of the log-det objective]\label{prop:logdet}
For fixed $\overline{\mathbf U}$, $\gamma\ge0$, and retained set $\mathcal V_{\rm cand}$, $f$ is normalized, monotone, and submodular. Let $\mathcal S_{\rm g}$ be the size-$N_s$ greedy set and $\mathcal S_{\rm cand}^{\star}$ the optimum of~\eqref{eq:new-logdet-objective}. Then
\begin{equation}
f(\mathcal S_{\rm g})
\ge\left[1-\left(1-\frac{1}{N_s}\right)^{N_s}\right]
f(\mathcal S_{\rm cand}^{\star})
\ge(1-1/e)f(\mathcal S_{\rm cand}^{\star}).
\end{equation}
This is the standard greedy cardinality bound for monotone submodular maximization~\cite{Nemhauser:1978,Shamaiah:2010}.
\end{lemma}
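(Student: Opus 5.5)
The plan is to verify the three structural properties of $f$ directly from properties of positive semidefinite matrices, and then invoke the classical Nemhauser--Wolsey--Fisher greedy bound~\cite{Nemhauser:1978}.

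\emph{Normalization and monotonicity.} Normalization is immediate: $f(\emptyset)=\log\det\mathbf I_r=0$. Each $\mathbf C_u=\overline{\mathbf U}^{\sf H}\hat{\mathbf R}_u\overline{\mathbf U}$ is Hermitian positive semidefinite, because $\hat{\mathbf R}_u$ in~\eqref{eq:corrected-cov} is a nonnegative combination of rank-one matrices $\mathbf a\mathbf a^{\sf H}$ and congruence preserves semidefiniteness. Write $\mathbf A_{\mathcal T}\triangleq\mathbf I_r+\gamma\sum_{u\in\mathcal T}\mathbf C_u$; this matrix is positive definite. For $v\notin\mathcal T$, the marginal gain is
\[
\Delta_v(\mathcal T)=\log\det\!\big(\mathbf I_r+\gamma\,\mathbf A_{\mathcal T}^{-1/2}\mathbf C_v\mathbf A_{\mathcal T}^{-1/2}\big).
\]
The matrix inside the determinant has all eigenvalues at least $1$, so $\Delta_v(\mathcal T)\ge 0$. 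This gives monotonicity.

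\emph{Submodularity (the main step).} Submodularity means that for $\mathcal T\subseteq\mathcal T'$ and $v\notin\mathcal T'$, we need $\Delta_v(\mathcal T)\ge\Delta_v(\mathcal T')$. I will factor $\gamma\mathbf C_v=\mathbf B\mathbf B^{\sf H}$ and apply Sylvester's determinant identity:
\[
\Delta_v(\mathcal T)=\log\det\!\big(\mathbf I+\mathbf B^{\sf H}\mathbf A_{\mathcal T}^{-1}\mathbf B\big).
\]
Since $\mathbf A_{\mathcal T'}\succeq\mathbf A_{\mathcal T}$ in the Loewner order, operator monotonicity of matrix inversion gives $\mathbf A_{\mathcal T'}^{-1}\preceq\mathbf A_{\mathcal T}^{-1}$. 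Hence
\[
\mathbf I+\mathbf B^{\sf H}\mathbf A_{\mathcal T'}^{-1}\mathbf B\preceq\mathbf I+\mathbf B^{\sf H}\mathbf A_{\mathcal T}^{-1}\mathbf B .
\]
Because $\log\det$ is monotone on positive definite matrices (the eigenvalues are ordered by Weyl's inequality), this yields the diminishing-returns inequality. As a cross-check, the same conclusion follows from concavity of $\log\det$ along the direction $\mathbf C_v$: the map $t\mapsto\log\det(\mathbf A+t\mathbf C_v)$ has derivative $\mathrm{tr}\big((\mathbf A+t\mathbf C_v)^{-1}\mathbf C_v\big)$, which is nonincreasing in $\mathbf A$. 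Integrating over $t\in[0,\gamma]$ then gives the result. This is the only nontrivial step, and I expect the operator-monotonicity argument to settle it cleanly. The case $\gamma=0$ is trivial, since then $f\equiv0$.

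\emph{Greedy guarantee.} With $f$ normalized, monotone and submodular on the finite ground set $\mathcal V_{\rm cand}$, and with $|\mathcal V_{\rm cand}|\ge N_s$ by~\eqref{eq:logdet-candidate-set}, the standard argument applies. Let $\mathcal S_i$ be the greedy set after $i$ steps. Monotonicity and submodularity give
\[
f(\mathcal S^\star_{\rm cand})-f(\mathcal S_i)\le N_s\big(f(\mathcal S_{i+1})-f(\mathcal S_i)\big).
\]
Iterating this $N_s$ times yields $f(\mathcal S_{\rm g})\ge[1-(1-1/N_s)^{N_s}]f(\mathcal S^\star_{\rm cand})$. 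The final inequality follows from $(1-1/N_s)^{N_s}\le e^{-1}$, which holds by $1-x\le e^{-x}$.

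By monotonicity, the optimum over $|\mathcal T|\le N_s$ may be taken with $|\mathcal T|=N_s$. This matches the size-$N_s$ greedy set, so the bound holds exactly as stated. It is relative only to the optimum over $\mathcal V_{\rm cand}$, not over the full pool.
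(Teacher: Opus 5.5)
Your proof is correct and follows essentially the same route as the paper's Appendix~\ref{app:lem1}: normalization, monotonicity from the Loewner order, and submodularity via Sylvester's determinant identity combined with $(\mathbf I_r+\gamma\boldsymbol\Sigma_{\mathcal T})^{-1}\succeq(\mathbf I_r+\gamma\boldsymbol\Sigma_{\mathcal B})^{-1}$ and monotonicity of $\log\det$, followed by the Nemhauser--Wolsey--Fisher greedy bound on $\mathcal V_{\rm cand}$. Your extra remarks fill in small details the paper leaves implicit: positive semidefiniteness of $\mathbf C_u$, $N_{\rm cand}\ge N_s$, and attaining the optimum at cardinality exactly $N_s$.
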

\begin{IEEEproof}
See Appendix~\ref{app:lem1}.
\end{IEEEproof}
The guarantee is relative to the optimum on the fixed retained set $\mathcal V_{\rm cand}$. It becomes a full-pool guarantee only when $N_{\rm cand}=N$; projection-score truncation itself has no approximation ratio here. It also concerns the predicted-covariance objective $f$, not the subsequent CQI selection, WMMSE sum rate, or DT mismatch.

Greedy construction uses the marginal gain
\begin{align}
\Delta_{\mathrm{ld}}(u\mid\mathcal{B})
&=\log\det\!\Big(\mathbf{I}_r+\gamma(\mathbf{\Sigma}_{\mathcal{B}}+\mathbf{C}_u)\Big)
-\log\det\!\Big(\mathbf{I}_r+\gamma\mathbf{\Sigma}_{\mathcal{B}}\Big),
\label{eq:new-logdet-marginal}
\end{align}
where $\mathbf{\Sigma}_{\mathcal{B}}=\sum_{v\in\mathcal{B}}\mathbf{C}_v$. The greedy rule evaluates this exact marginal gain and, because $f$ is monotone, selects exactly $N_s$ users whenever $|\mathcal V_{\rm cand}|\ge N_s$.

\subsubsection{Step 2: Scalar-CQI selection and final effective-CSI acquisition}
Given the prescreened set $\mathcal{S}$, each user $u\in\mathcal{S}$ locally forms its estimated effective-channel vector
\begin{equation}
\hat{\mathbf{g}}_u=\hat{\mathbf{h}}_u^{\sf H}\overline{\mathbf{U}},
\end{equation}
and reports only
\begin{align}\label{eq:cqi}
b_u &\triangleq \arg\max_{i\in\{1,\ldots,r\}}\!|\hat{g}_{u,i}|^2,
&Q_u &\triangleq \frac{\|\hat{\mathbf{g}}_u\|^2}{\Delta},\nonumber\\
c_u &\triangleq\mathbf{1}\!\left\{\frac{|\hat{g}_{u,b_u}|^2}{\|\hat{\mathbf{g}}_u\|^2}\ge\alpha_{\rm cone}\right\}.&&
\end{align}
Thus, $\hat{\mathbf g}_u$ remains local to the user. The BS first forms the cone-qualified set $\mathcal{W}_i=\{u\in\mathcal{S}:b_u=i,\ c_u=1\}$ for each beam and its per-beam winner pool
\begin{equation}
\mathcal{G}=\Big\{\arg\max_{u\in\mathcal{W}_i}Q_u:\ i=1,\ldots,r,\ \mathcal{W}_i\neq\emptyset\Big\}.
\end{equation}
When $r>K$ or fewer than $K$ cones are populated, the BS retains at most the $K$ strongest per-beam winners, $\mathcal{K}_0=\operatorname{Top}_{\min(K,|\mathcal G|)}(\mathcal G;Q_u)$, and fills any remaining positions using only the reported scalar $Q_u$:
\begin{equation}\label{eq:new-fill}
\mathcal{K}
\!=\!\mathcal{K}_{0}
\cup
\operatorname{Top}_{K-|\mathcal{K}_{0}|}
\!\Big(\mathcal{S}\!\setminus\!\mathcal{K}_{0};\ Q_u\Big).
\end{equation}
Because every shortlisted user reports $(Q_u,c_u)$, the BS need not acquire any shortlisted user's effective-channel vector before final selection. After $\mathcal K$ is fixed, the BS acquires $r$-dimensional effective-channel vectors only from those $K$ users and computes the WMMSE inner precoder described in Section~\ref{sec:second-stage}.

\subsection{Computational Complexity Comparison}
We compare the asymptotic computational complexity of the proposed variants against conventional SUS:
\begin{itemize}
    \item \textbf{SUS}: Using iterative Gram--Schmidt residual updates, SUS requires $O(LK^2M)$ projection work over at most $K$ selections and $L$ observed candidates, plus $O(LKM)$ correlation filtering.
    \item \textbf{DiTUS-P}: For an average of $P_{\rm path}$ DT paths, projection-score evaluation costs $O(NP_{\rm path}Mr)$ and sorting costs $O(N\log N)$. User-side reference-beam projection costs $O(N_sMr)$ in aggregate; the BS receives only scalar reports.
    \item \textbf{DiTUS-L}: Forming the beam-domain covariances $\mathbf C_u$ has the same $O(NP_{\rm path}Mr)$ order. After the projection-score prefilter in~\eqref{eq:logdet-candidate-set}, direct pure log-det greedy evaluation over $N_{\rm cand}$ covariance candidates costs $O(N_sN_{\rm cand}r^3)$. DiTUS-L requires no effective-CSI-based refinement across the shortlist.
\end{itemize}
These expressions apply to the algorithms defined above and do not imply a universal linear-versus-quadratic advantage in $M$.

\section{Extension to Proportional Fairness}
While \eqref{eq:optimization} targets instantaneous sum-rate, practical systems often balance throughput with user fairness via the PF criterion~\cite{Viswanath:2002}. Maximizing the PF utility is asymptotically equivalent to maximizing a weighted sum rate in each slot:
\begin{equation}\label{eq:weighted-sr-problem}
    \underset{\mathcal K_t,\mathbf F_{\mathcal K_t}}{\text{maximize}} \quad \sum_{u\in\mathcal K_t} w_u(t) \log_2(1+\gamma_u(t)),
\end{equation}
where $\bar R_u(0)=\epsilon_{\rm PF}=\CfgPfEpsilon$ and $w_u(t)=1/\max\{\bar R_u(t-1),\epsilon_{\rm PF}\}$ is computed from the throughput state available before slot $t$. A common positive scaling normalizes the weights to unit mean without changing their relative values or the maximizing user set and precoder.
The PF extension evaluated here is based on DiTUS-P and incorporates $w_u(t)$ into both stages. It uses $s_u^{\mathrm{PF}}(t)=w_u(t)\operatorname{tr}(\overline{\mathbf U}^{\sf H}\hat{\mathbf R}_u\overline{\mathbf U})$, thereby favoring users with low accumulated throughput. Each shortlisted user still reports only the unweighted scalar tuple $(b_u,Q_u,c_u)$, and the BS ranks reports by $w_u(t)Q_u$ using the weights available at the beginning of slot $t$. Thus, no weight or shortlisted-user effective-channel vector must be reported. \rev{After $\mathcal K_t$ is selected and its effective CSI is acquired, weighted WMMSE computes a stationary point of $\sum_{k\in\mathcal K_t}w_k(t)\log_2(1+\gamma_k)$ by setting $\nu_k=w_k(t)/e_k$ in~\eqref{eq:wmmse-update}~\cite{Shi:2011}.}

More generally, a static factor $w_u^{\mathrm{app}}$ can encode application-class priority. The same weighted score and CQI ranking then provide a priority-aware heuristic, but without explicit queue or rate constraints they do not guarantee latency or minimum rate. The time-domain PF procedure is summarized in Algorithm~\ref{alg:pf}.

\begin{algorithm}[t]
\caption{DiTUS with Proportional Fairness (Time-Domain)}
\label{alg:pf}
\SetAlgoLined
\KwIn{$\mathcal U\triangleq\{1,\ldots,N\}$, $\overline{\mathbf U}$, $\{\hat{\mathbf R}_u\}$, $N_s$, $K$, $T_c$, $T$, and $\epsilon_{\rm PF}$; initialize $\bar R_u(0)\gets\epsilon_{\rm PF}$}
\For{$t=1$ \KwTo $T$}{
$w_u(t)\gets 1/\max\{\bar R_u(t-1),\epsilon_{\rm PF}\}$; normalize the common scale; compute $s_u^{\mathrm{PF}}(t)=w_u(t)\,\mathrm{tr}(\overline{\mathbf{U}}^{\sf H}\hat{\mathbf{R}}_u\overline{\mathbf{U}})$\;
Select top-$N_s$ users $\mathcal{S}_t$ by $s_u^{\mathrm{PF}}(t)$\;
Each $u\in\mathcal{S}_t$ estimates $\hat{\mathbf{g}}_u$ and reports $(b_u,Q_u,c_u)$\;
$\mathcal{K}_t\gets\operatorname{Select}_K(\mathcal{S}_t;\,b_u,w_u(t)Q_u,c_u)$; collect effective CSI only from $\mathcal K_t$ and compute weighted-WMMSE $\mathbf{V}$ with the same $w_u(t)$\;
$R_u(t)\gets\log_2(1+\gamma_u(t))$ for $u\in\mathcal K_t$ and $R_u(t)\gets0$ otherwise; $\bar R_u(t)\gets(1-1/T_c)\bar R_u(t-1)+R_u(t)/T_c$\;
}
\end{algorithm}

$\operatorname{Select}_K$ denotes Algorithm~\ref{alg:ditus}'s per-beam winner selection and CQI-based completion rule with $Q_u$ replaced by $w_u(t)Q_u$. The PF evaluation omits GP updates; joint PF--GP adaptation is not considered. Incorporating PF weights into both stages balances accumulated throughput against instantaneous spatial quality, and the resulting fairness is assessed empirically over time.

\section{Simulation Results}
\rev{Downlink reference-signal resources and uplink feedback bits are reported separately because combining them requires an implementation-specific mapping. Define $B_s(r)\triangleq\lceil\log_2r\rceil+B_Q+1$ for one scalar report and $B_g(d)$ for a quantized $d$-dimensional channel vector. Excluding anchor costs, the feedback loads are}
\begin{align}\label{eq:overhead-resource}
 B_{\rm full}&=N B_g(M), & B_{L\text{-SUS}}&=L B_g(M),\\
 B_{\rm DiTUS}&=N_sB_s(r)+KB_g(r).&&
\end{align}
\rev{DiTUS uses the common downlink cost $C_{\rm RS}(r)$; under the same port-count model, the SUS and full-pool baselines acquire $M$-dimensional full CSI and use $C_{\rm RS}(M)$. If $n_{\rm A}$ new path-resolved anchor measurements are acquired, their measurement and feedback costs $n_{\rm A}C_{\rm A}$ and $n_{\rm A}B_{\rm A}$ are reported separately. For the large-pool configuration $N=\CfgBaseN$ and $K=\CfgBaseK$ in Table~\ref{tab:simulation-config}, the \emph{effective-CSI reporter count} relative to full-pool acquisition decreases by}
\begin{equation}\label{eq:overhead-reduction}
    1-\frac{K}{N} = 1 - \frac{\CfgBaseK}{\CfgBaseN} = 96.8\%.
\end{equation}
\rev{This is a reporter-count, not total-overhead, reduction because the $N_s$ scalar reports and anchor resources remain. We therefore report the tuple $(C_{\rm RS},N_sB_s(r),KB_g(r),n_{\rm A}(C_{\rm A},B_{\rm A}))$ rather than rate versus unspecified total bits.}

\petar{Table~\ref{tab:overhead} extends this accounting to every compared scheme. All prescreening methods feeding the shared second stage---DiTUS-P/L, Max-RSRP, Max-Power, and Random (DT/DFT)---incur the identical per-instance cost $C_{\rm RS}(r)$, $N_s$ scalar reports, and $K$ effective-channel reports and differ only in the ranking rule, so comparisons among them are equal-overhead. Their ranking inputs, beam-wise long-term RSRP and total long-term path power, are read from the same perturbed DT statistics DiTUS uses and are charged no per-slot cost. Without a DT they would require slow-timescale RSRP reporting from all $N$ candidates, a cost we note but do not charge, so the comparison is conservative for DiTUS. By contrast, $L$-user SUS acquires $M$-dimensional CSI and is not overhead-matched at a common $L=N_s$ since $B_g(M)\gg B_s(r)$.}

\petar{The ``Full-pool perfect-CSI SUS'' curve is an oracle reference, not a deployable scheme: it runs SUS over all $N$ candidates with perfect, unquantized $M$-dimensional CSI and is charged no acquisition cost. Its feedback load is therefore unbounded rather than $NB_g(M)$, since perfect CSI corresponds to $B_g(M)\to\infty$. It measures the value of unrestricted, error-free channel knowledge for this selector---not optimal performance, as SUS is greedy---so gaps to it reflect the price of a finite acquisition budget.}

\begin{table*}[t]
\centering
\caption{\petar{Per-instance acquisition accounting for all compared schemes (anchor resources excluded and reported separately). $B_s(r)=\lceil\log_2r\rceil+B_Q+1$ is one scalar report and $B_g(d)$ one quantized $d$-dimensional channel vector. ``Offline DT'' ranking inputs are read from the perturbed DT statistics and incur no per-slot air-interface cost.}}
\label{tab:overhead}
\setlength{\tabcolsep}{4pt}
\renewcommand{\arraystretch}{1.15}
\scriptsize
\begin{tabular}{l|c|c|c|c|c}
\hline
\textbf{Scheme} &
\petar{\textbf{Ranking-stage input}} &
\textbf{CSI-reporting users} &
\textbf{Scalar-report users} &
\petar{\textbf{Reference signal}} &
\textbf{Feedback bits} \\
\hline
\petar{Full-pool perfect-CSI SUS (oracle)} &
\petar{---} &
\petar{$N$ ($M$-dim, unquantized)} &
\petar{$0$} &
\petar{$C_{\rm RS}(M)$} &
\petar{unbounded ($B_g(M)\!\to\!\infty$)} \\
\hline
Full-pool CSI &
\petar{---} &
$N$ \petar{($M$-dim)} &
$0$ &
\petar{$C_{\rm RS}(M)$} &
$NB_g(M)$ \\
\hline
$L$-user SUS &
\petar{---} &
$L$ \petar{($M$-dim)} &
$0$ &
\petar{$C_{\rm RS}(M)$} &
$LB_g(M)$ \\
\hline
DiTUS-P/L &
\petar{Offline DT covariance} &
$K$ \petar{($r$-dim)} &
$N_s$ &
\petar{$C_{\rm RS}(r)$} &
$N_sB_s(r)+KB_g(r)$ \\
\hline
\petar{Max-RSRP} &
\petar{Offline DT beam-wise RSRP} &
\petar{$K$ ($r$-dim)} &
\petar{$N_s$} &
\petar{$C_{\rm RS}(r)$} &
\petar{$N_sB_s(r)+KB_g(r)$} \\
\hline
\petar{Max-Power} &
\petar{Offline DT path power} &
\petar{$K$ ($r$-dim)} &
\petar{$N_s$} &
\petar{$C_{\rm RS}(r)$} &
\petar{$N_sB_s(r)+KB_g(r)$} \\
\hline
\petar{Random (DT/DFT)} &
\petar{None} &
\petar{$K$ ($r$-dim)} &
\petar{$N_s$} &
\petar{$C_{\rm RS}(r)$} &
\petar{$N_sB_s(r)+KB_g(r)$} \\
\hline
\end{tabular}
\end{table*}

\subsection{Simulation Setup}
\subsubsection{Instantaneous vs. Statistical Channel Generation}
We distinguish between (i) \emph{statistical} channel information used by the DT and (ii) \emph{instantaneous} channels used to evaluate scheduling and precoding.
In the simulations, DT statistics comprise ray-tracing AoDs, path powers, and delays. DiTUS uses the AoDs and powers to construct $\hat{\mathbf{R}}_u$, $\hat{\mathbf{R}}$, and $s_u$ through~\eqref{eq:corrected-cov}--\eqref{eq:cell-cov}; delay does not enter the narrowband covariance.

Instantaneous narrowband channels follow~\eqref{eq:geo-channel}, with i.i.d. phases $\phi_{u,\ell}\sim\mathrm{Unif}[-\pi,\pi]$ resampled for each realization. The physical channel uses the unperturbed ray-tracing paths $(L_u,p_{u,\ell},\theta_{u,\ell})$, whereas the scheduler uses their specified DT perturbations $(\hat L_u,\hat p_{u,\ell},\hat\theta_{u,\ell})$, with path powers further corrected when GP calibration is enabled. The physical channels and DT covariance powers are normalized by the same pool-dependent factor, preserving their relative scale.
To emulate imperfect channel knowledge, we first perturb instantaneous channels via an additive estimation-error model, and then compute first-stage effective CQI by projecting the noisy channel estimates onto the reference beams.

\subsubsection{Parameters}
\rev{We evaluate DiTUS using the DeepMIMO ASU Campus 3.5-GHz scenario~\cite{alkhateeb2019deepmimo}. In each Monte Carlo trial, the candidate pool is sampled from active users whose total long-term received power exceeds $\CfgMinRsrp$~dB. Unless stated otherwise, the large-pool studies use $N=\CfgBaseN$; the main SNR study uses $N=\CfgMainN$, which permits a full-pool perfect-CSI SUS reference over all $N$ users.}
Unless stated otherwise, we use $M=\CfgBaseM$ BS antennas, schedule $K=\CfgBaseK$ users per slot, set $r=K=\CfgBaseRank$, use $\alpha_{\rm cone}=\CfgAlphaCone$ and $\alpha_{\rm SUS}=\CfgAlphaSUS$, and retain at most $N_{\rm cand}=\min\{N,\CfgCandidateMult N_s\}$ covariance candidates for DiTUS-L. DiTUS-L uses the pure log-det objective in~\eqref{eq:new-logdet-objective}. Table~\ref{tab:simulation-config} summarizes the common simulation parameters and experiment-specific settings.
\rev{To model imperfect CSI, we use $\hat{\mathbf{h}}=\sqrt{1-\zeta}\,\mathbf{h}+\sqrt{\zeta}\,\mathbf{e}$, where $\mathbf e\sim\mathcal{CN}(\mathbf0,\mathbf I_M/M)$ is independent of $\mathbf h$. The physical channels are normalized to unit average energy over each candidate pool, and $\mathbb E[\|\mathbf e\|^2]=1$; thus, $\zeta\in[0,1]$ controls the average channel-error mixture and is distinct from the transmit SNR $\rho$ in Section~IV. We set $\zeta=\CfgCsiError$.}
For the main SNR sweep in Fig.~\ref{fig:dt_vs_conventional_snr}, we evaluate $\{-5,0,5,10,15,20,25\}$~dB with $N=\CfgMainN$ and effective-CSI acquisition budget $L=\CfgMainL$. The prescreening size is $N_s=\CfgMainNs=L$ in this sweep. We separately sweep $L\in\{16,24,32,48,64,80,128\}$ at 15~dB.

\begin{table}[!t]
\centering
\caption{Simulation parameters and experiment-specific settings.}
\label{tab:simulation-config}
\setlength{\tabcolsep}{3.5pt}
\renewcommand{\arraystretch}{1.05}
\scriptsize
\begin{tabular}{l|l}
\hline
\textbf{Item} & \textbf{Setting} \\
\hline
Base system & $N=\CfgBaseN$, $M=\CfgBaseM$, $K=\CfgBaseK$ \\
Base reporting & $L=\CfgBaseL$, $N_s=\CfgBaseNs$, $r=\CfgBaseRank$ \\
Main SNR sweep & $N=\CfgMainN$, $\{-5,0,5,10,15,20,25\}$~dB, \CfgTrials\ trials/point \\
Selection & $\alpha_{\rm cone}=\CfgAlphaCone$, $\alpha_{\rm SUS}=\CfgAlphaSUS$, $N_{\rm cand}\leq\CfgCandidateMult N_s$ \\
Main DT/CSI errors & $\sigma_{\rm aod}=\CfgMainAodError^\circ$, $\sigma_{\rm pow}=\CfgMainPowerError$~dB, $\zeta=\CfgCsiError$ \\
Inner precoder & WMMSE, \CfgWmmseIters\ iterations, tolerance $\CfgWmmseTol$ \\
PF experiment & $N_s=\CfgPfNs$, $T=\CfgPfSlots$, $T_c=\CfgPfTc$, \CfgPfTrajectories\ trajectories \\
Causal-GP experiment & \CfgGpSlots\ slots, \CfgGpTrajectories\ trajectories, up to \CfgGpAnchorsPerSlot\ calibrated users/slot \\
\hline
\end{tabular}
\end{table}

\subsubsection{Monte Carlo Methodology and Uncertainty Quantification}
For each trial or trajectory, we draw a candidate pool, instantaneous path phases, DT perturbations, and a channel-estimation error realization. All methods compared within that trial share the same physical pool and channel/error realization, whereas random choices made by individual methods are independent. For paired parameter sweeps, the underlying realization is reused across the swept settings. Error bars report two-sided 95\% confidence-interval half-widths over independent trials or trajectories, not over users or time slots: Student-$t$ intervals over 200 trials for the $L/N_s$, robustness, $M/K$, and cone-threshold studies, and normal-approximation intervals $1.96s/\sqrt{n}$ for the main-SNR study and the 20-trajectory PF and causal-GP studies. No correction is applied for multiple comparisons across sweep points; accordingly, isolated small differences are interpreted as exploratory.

\subsection{Compared Schemes}
\label{sec:compared-schemes}
\rev{Methods in the two-stage framework use $\mathbf{F}_{\mathrm{ts}}=\overline{\mathbf{U}}\mathbf{V}$, where $\overline{\mathbf{U}}$ is the DT- or discrete-Fourier-transform (DFT)-based outer beamformer and $\mathbf{V}$ is the inner WMMSE precoder designed from $\hat{\mathbf{h}}_k^{\sf H}\overline{\mathbf{U}}$. The SUS baseline instead acquires full-dimensional CSI from its $L$ granted users and applies ZF with water-filling in the antenna domain. The proposed schemes and benchmarks are summarized as follows:}
\begin{itemize}
    \item \textbf{DiTUS-P (Proposed):} DT projection-score prescreening~\eqref{eq:prescreen-score}, cone-based CQI selection, and DT-based two-stage precoding.
    \item \textbf{DiTUS-L (Proposed):} Greedy pure log-det prescreening~\eqref{eq:new-logdet-objective}, one scalar-CQI reporting stage, and DT-based two-stage precoding using effective-channel vectors only from the final $K$ users.
    \item \textbf{SUS~\cite{Yoo:2006}:} SUS over $L$ randomly granted users with full-dimensional CSI, ZF, and water-filling in the antenna domain; Fig.~\ref{fig:dt_vs_conventional_snr} also includes ``Full-pool perfect-CSI SUS,'' obtained by running SUS over all $N$ users with perfect full-dimensional CSI. \petar{This curve is an overhead-free oracle, not a competing scheme; the remaining baselines share DiTUS's per-instance overhead and differ only in the ranking rule (Table~\ref{tab:overhead}).}
    \item \textbf{Max-RSRP:} Ranks users by the maximum long-term reference signal received power (RSRP) over the DFT beams, then applies the same DFT-based second-stage selector and precoder.
    \item \textbf{Max-Power:} Ranks users by total long-term path power, then uses the same DFT-based second-stage processing as Max-RSRP.
    \item \textbf{Random (DT/DFT):} Uniform random prescreening followed by the DT- or DFT-based second-stage selector and two-stage precoder.
\end{itemize}

\subsection{Numerical Results}
\rev{Fig.~\ref{fig:dt_vs_conventional_snr} plots the mean sum rate with normal-approximation 95\% confidence intervals over \ProdMainTrials\ trials. Both DiTUS variants outperform $L$-user SUS at every tested SNR. At 15~dB, DiTUS-P and DiTUS-L achieve $\ProdMainPAtFifteen\!\pm\!\ProdMainPCIAtFifteen$ and $\ProdMainLAtFifteen\!\pm\!\ProdMainLCIAtFifteen$~bps/Hz, respectively, versus $\ProdMainSusAtFifteen\!\pm\!\ProdMainSusCIAtFifteen$~bps/Hz for SUS. The DiTUS-L gain is $\ProdMainLGainVsGrantedSusAtFifteen\%$ at 15~dB and $\ProdMainLGainVsGrantedSusAtTwenty\%$ at 20~dB. SUS searches only its $L$ randomly granted users with full-dimensional CSI, whereas DiTUS ranks the full pool and requests effective-channel vectors only from the final $K$. The gain over Random (DT), which changes only the shortlist rule, isolates the benefit of DT ranking. The remaining gap to full-pool perfect-CSI SUS ($\ProdMainAllSusAtFifteen\!\pm\!\ProdMainAllSusCIAtFifteen$~bps/Hz at 15~dB) reflects imperfect DT statistics, scalar reports, and reduced-rank precoding; \petar{because that reference is charged no acquisition cost, the gap measures the price of a finite feedback budget rather than a shortfall against a deployable scheme.} Its larger value at 25~dB is consistent with thermal noise receding while ranking error, subspace truncation, and residual interference remain. DiTUS-P and DiTUS-L are nearly identical when $L=N_s$, indicating that the projection score captures most of the prescreening gain in this setting.}

\begin{figure}[!htbp]
    \centering
    \includegraphics[width=\columnwidth]{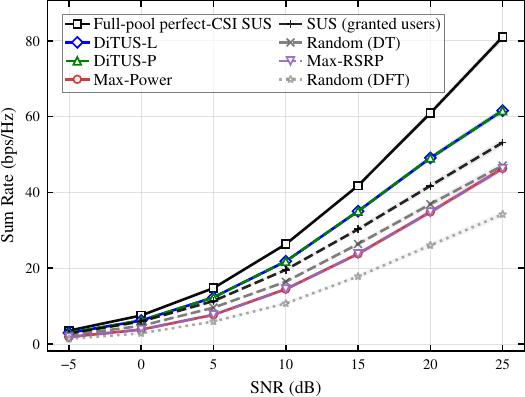}
    \caption{\rev{Sum rate vs. SNR under DT imperfections ($\sigma_{\mathrm{aod}}=\CfgMainAodError^\circ$, $\sigma_{\mathrm{pow}}=\CfgMainPowerError$~dB), with $N=\CfgMainN$, $M=\CfgBaseM$, $K=\CfgBaseK$, $L=N_s=\CfgMainNs$, and $r=\CfgMainRank$. Error bars show normal-approximation 95\% confidence intervals over \CfgTrials\ independent trials. The black ``Full-pool perfect-CSI SUS'' curve runs SUS over all $N$ users with perfect CSI and serves as \petar{an overhead-free, scheduler-specific oracle reference (Table~\ref{tab:overhead})}.}}
    \label{fig:dt_vs_conventional_snr}
\end{figure}

\subsection{Impact of $L$, $N_s$, and Their Decoupling}
Fig.~\ref{fig:combined-L-Ns}(a) couples the effective-CSI acquisition and scalar-report budgets. At the base point $L=N_s=\ProdLBase$, DiTUS-P, DiTUS-L, and SUS achieve $\ProdLBaseP\!\pm\!\ProdLBasePCI$, $\ProdLBaseL\!\pm\!\ProdLBaseLCI$, and $\ProdLBaseSus\!\pm\!\ProdLBaseSusCI$~bps/Hz. Panel~(b) instead holds SUS at $L=64$ users with full-dimensional CSI while every scalar-CQI-based prescreening method uses the swept $N_s$: DiTUS-L changes from $\ProdNsBaseL\!\pm\!\ProdNsBaseLCI$ at $N_s=\ProdNsBase$ to $\ProdNsMaxL\!\pm\!\ProdNsMaxLCI$~bps/Hz at $N_s=\ProdNsMax$. Thus panel~(b) isolates the performance effect and cost of additional scalar-CQI reports, but is not an equal-total-feedback comparison; $N_s=N$ removes DT prescreening while retaining scalar CQI from all users.

\begin{figure}[!htbp]
    \centering
    \includegraphics[width=\columnwidth]{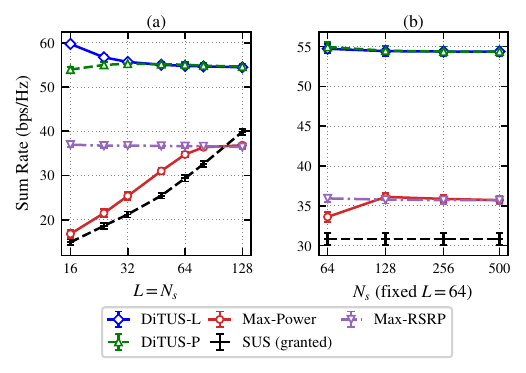}
    \caption{Sum rate at 15~dB with $N=500$, $M=64$, $K=r=16$: (a) coupled $L=N_s$; (b) scalar-report $N_s$ with SUS fixed at $L=64$ users providing full-dimensional CSI. Error bars show Student-$t$ 95\% confidence intervals over 200 paired trials. Panel~(b) gives every scalar-CQI-based method the same $N_s$, increases scalar-feedback bits, and is not an equal-total-feedback comparison.}
    \label{fig:combined-L-Ns}
\end{figure}

\subsection{Proportional Fairness Analysis}
\rev{The PF experiment follows Algorithm~\ref{alg:pf}. Each slot uses the current exponentially weighted moving-average throughput state for weighted top-$N_s$ prescreening; $N_s=\CfgPfNs$ users provide scalar CQI and the final $K=\CfgBaseK$ provide effective-channel vectors. We use $T=\CfgPfSlots$, $N=\CfgPfN$, SNR~$=\CfgPfSnr$~dB, and $T_c=\CfgPfTc$ over \CfgPfTrajectories\ trajectories. After a $5T_c=250$-slot warm-up, the remaining 250 slots determine sum rate, Jain's index, fifth-percentile throughput, and service coverage, the fraction of users scheduled at least once.}

\rev{Across \ProdPFTrajectories\ trajectories, PF-DiTUS obtains mean sum rate $\ProdPFSumRate\!\pm\!\ProdPFSumRateCI$~bps/Hz, Jain index $\ProdPFJain\!\pm\!\ProdPFJainCI$, and service coverage $\ProdPFDitusServiceCoverage\!\pm\!\ProdPFDitusServiceCoverageCI$. Its CDF is lower than those of the three equal-reporting-budget methods at nearly all positive thresholds in Fig.~\ref{fig:pf-simulation}, whereas PF-Random has less probability mass at zero; hence, the CDFs cross. Fifth-percentile throughput is zero for all methods over this horizon, so Table~\ref{tab:pf-summary} reports coverage with trajectory-level intervals.}

\begin{figure}[!htbp]
    \centering
    \includegraphics[width=\columnwidth]{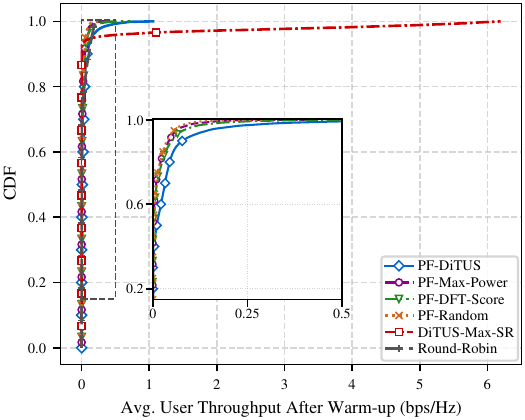}
    \caption{Per-user throughput CDF after a 250-slot warm-up period, pooled across \ProdPFTrajectories\ independent \ProdPFSlots-slot trajectories. The dashed box marks the $0\leq R_u\leq0.5$~bps/Hz domain enlarged in the inset for PF-DiTUS and its three baselines with the same reporting budget. The full view also includes the lower-feedback Round-Robin and unweighted DiTUS-Max-SR references.}
    \label{fig:pf-simulation}
\end{figure}

\begin{figure*}[!t]
    \centering
    \includegraphics[width=\textwidth]{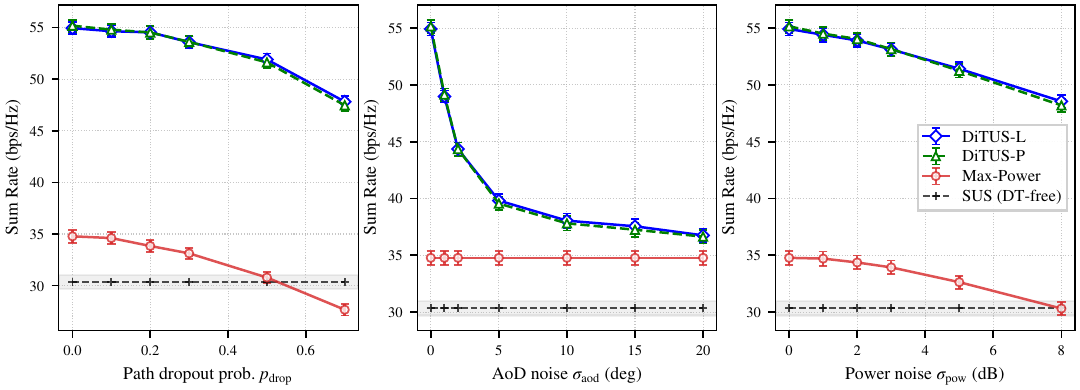}
    \caption{Sum rate under (a) path dropout, (b) AoD noise, and (c) power noise at 15~dB with $N=500$, $M=64$, $K=16$, and $L=N_s=64$. Bars are Student-$t$ 95\% confidence intervals over 200 trials; SUS is DT-independent.}
    \label{fig:dt_imperfection_robustness}
\end{figure*}

\begin{table}[!t]
\centering
\caption{PF comparison at $N=500$, $K=16$, $N_s=256$, and 20~dB. Sum rate is in bps/Hz and service coverage is a fraction; entries are trajectory means $\pm$ 95\% interval half-widths.}
\label{tab:pf-summary}
\renewcommand{\arraystretch}{1.1}
\scriptsize
\setlength{\tabcolsep}{2.4pt}
\begin{tabular}{l|c|c|c}
\hline
\textbf{Method} & \textbf{Sum rate} & \textbf{Jain index} & \textbf{Coverage} \\
\hline
PF-DiTUS & $\ProdPFDitusSumRate\!\pm\!\ProdPFDitusSumRateCI$ & $\ProdPFDitusJain\!\pm\!\ProdPFDitusJainCI$ & $\ProdPFDitusServiceCoverage\!\pm\!\ProdPFDitusServiceCoverageCI$ \\
PF-Max-Power & $\ProdPFMaxPowerSumRate\!\pm\!\ProdPFMaxPowerSumRateCI$ & $\ProdPFMaxPowerJain\!\pm\!\ProdPFMaxPowerJainCI$ & $\ProdPFMaxPowerServiceCoverage\!\pm\!\ProdPFMaxPowerServiceCoverageCI$ \\
PF-DFT-Score & $\ProdPFDftScoreSumRate\!\pm\!\ProdPFDftScoreSumRateCI$ & $\ProdPFDftScoreJain\!\pm\!\ProdPFDftScoreJainCI$ & $\ProdPFDftScoreServiceCoverage\!\pm\!\ProdPFDftScoreServiceCoverageCI$ \\
PF-Random & $\ProdPFRandomSumRate\!\pm\!\ProdPFRandomSumRateCI$ & $\ProdPFRandomJain\!\pm\!\ProdPFRandomJainCI$ & $\ProdPFRandomServiceCoverage\!\pm\!\ProdPFRandomServiceCoverageCI$ \\
DiTUS-Max-SR & $\ProdPFMaxSrSumRate\!\pm\!\ProdPFMaxSrSumRateCI$ & $\ProdPFMaxSrJain\!\pm\!\ProdPFMaxSrJainCI$ & $\ProdPFMaxSrServiceCoverage\!\pm\!\ProdPFMaxSrServiceCoverageCI$ \\
Round-Robin & $\ProdPFRoundRobinSumRate\!\pm\!\ProdPFRoundRobinSumRateCI$ & $\ProdPFRoundRobinJain\!\pm\!\ProdPFRoundRobinJainCI$ & $\ProdPFRoundRobinServiceCoverage\!\pm\!\ProdPFRoundRobinServiceCoverageCI$ \\
\hline
\end{tabular}
\end{table}

\subsection{Robustness to DT Imperfections}
\rev{Fig.~\ref{fig:dt_imperfection_robustness} varies one DT-error component at a time. At the largest tested path-drop probability $p_{\rm drop}=\ProdDropMax$, AoD standard deviation $\sigma_{\rm aod}=\ProdAodMax^\circ$, and power-error standard deviation $\sigma_{\rm pow}=\ProdPowerMax$~dB, DiTUS-L obtains $\ProdDropMaxL\!\pm\!\ProdDropMaxLCI$, $\ProdAodMaxL\!\pm\!\ProdAodMaxLCI$, and $\ProdPowerMaxL\!\pm\!\ProdPowerMaxLCI$~bps/Hz, respectively. Across all three sweeps, the dominant trend is a gradual rate decrease as DT error increases; isolated nonmonotonic variations are contained within the reported confidence intervals.}

\subsection{Cone-Threshold Sensitivity}
Fig.~\ref{fig:alpha-cone-sensitivity} shows a paired 200-trial threshold sweep. DiTUS-L gives $\ProdAlphaMinL\!\pm\!\ProdAlphaMinLCI$, $\ProdAlphaDefaultL\!\pm\!\ProdAlphaDefaultLCI$, and $\ProdAlphaMaxL\!\pm\!\ProdAlphaMaxLCI$~bps/Hz at $\alpha_{\rm cone}=\ProdAlphaMin$, $\ProdAlphaDefault$, and $\ProdAlphaMax$, respectively. At the default threshold, $\ProdAlphaDefaultCone\!\pm\!\ProdAlphaDefaultConeCI$ shortlisted users satisfy the cone condition on average before the CQI-based completion rule is applied. The interior choice $\CfgAlphaCone$ provides a compromise between cone-qualified-user availability and angular selectivity.
\begin{figure}[!htbp]
    \centering
    \includegraphics[width=0.94\columnwidth]{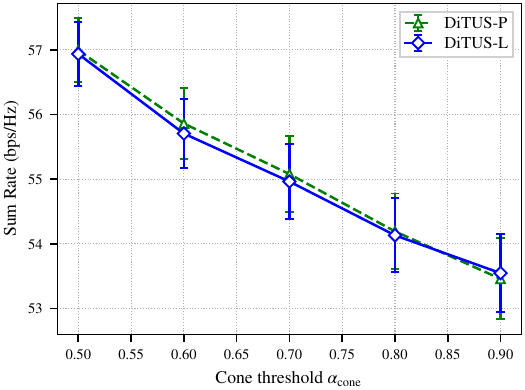}
    \caption{Sum rate versus $\alpha_{\rm cone}$ at 15~dB with $N=500$, $M=64$, $K=16$, and $L=N_s=64$. Bars are Student-$t$ 95\% confidence intervals over 200 trials.}
    \label{fig:alpha-cone-sensitivity}
\end{figure}

\subsection{Causal GP Calibration}\label{sec:causal-gp}
This is a separate systematic-error stress test: the DT path power is increased by 20~dB for every valid path with post-perturbation AoD $-150^\circ<\hat\theta<-120^\circ$, while random path dropout and AoD/power errors are disabled. The causal-GP experiment uses \ProdGPTrajectories\ independent \ProdGPSlots-slot trajectories and collects calibration data from up to \CfgGpAnchorsPerSlot\ scheduled users after each slot; at most 64 anchor users from the previous \CfgGpWindowSlots\ slots are retained. Over slots 51--100, the no-GP method, scheduled-user causal GP, and a random full-pool oracle-label GP reference with the same measured-user budget achieve mean rates $\ProdGPNoRate\!\pm\!\ProdGPNoRateCI$, $\ProdGPCausalRate\!\pm\!\ProdGPCausalRateCI$, and $\ProdGPOracleRate\!\pm\!\ProdGPOracleRateCI$~bps/Hz, respectively. All methods start without labels and coincide in slot~1; labels acquired after slot $t$ first affect slot $t+1$. The full-pool reference draws users from the entire candidate pool and is unavailable under the proposed reporting protocol.

\begin{figure}[!htbp]
    \centering
    \includegraphics[width=\columnwidth]{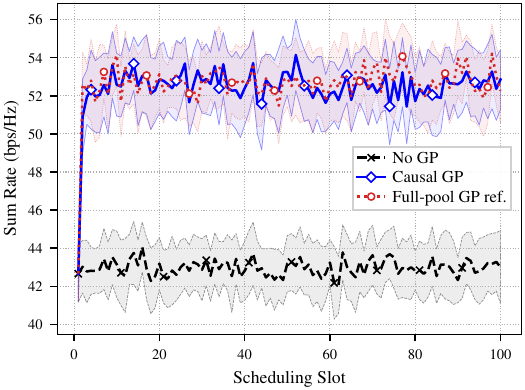}%
    \caption{Slot-wise sum rate at 15~dB for the separate 20-dB sector-bias stress test ($-150^\circ<\hat\theta<-120^\circ$) with $N=500$, $M=64$, $K=16$, and $L=N_s=64$. Across \CfgGpTrajectories\ independent trajectories, curves show slot-wise means and shaded two-sided normal 95\% confidence intervals. Both GP methods collect calibration data from up to \CfgGpAnchorsPerSlot\ users after each slot but differ in scheduled-only versus random full-pool oracle-label access.}
    \label{fig:gp-causal}
\end{figure}

\subsection{Impact of Antenna Count, Scheduled-User Count, and Subspace Rank}
Fig.~\ref{fig:combined-MK-sweep} separates fixed-rank operation from the energy-capture rule in Remark~\ref{rem:rank}. At $M=\ProdMBase$, fixed $r=16$ captures $\ProdMBaseEta\!\pm\!\ProdMBaseEtaCI$ of the predicted covariance energy, while the empirical distribution of $r_{0.9}$ has 5th/median/95th percentiles $\ProdMBaseRankPZeroFive/\ProdMBaseRankMedian/\ProdMBaseRankPNinetyFive$. DiTUS-L obtains $\ProdMBaseL\!\pm\!\ProdMBaseLCI$ with fixed rank and $\ProdMBaseLAdaptive\!\pm\!\ProdMBaseLAdaptiveCI$~bps/Hz with adaptive rank; this comparison changes the reference-signal and effective-CSI dimensions and therefore does not preserve the feedback overhead. Panel~(b) uses $L(K)=\max(64,4K)$, fixed $N_s=64$, and $r=K$; at $K=\ProdKBase$, DiTUS-L obtains $\ProdKBaseL\!\pm\!\ProdKBaseLCI$~bps/Hz.

\begin{figure}[!htbp]
    \centering
    \includegraphics[width=\columnwidth]{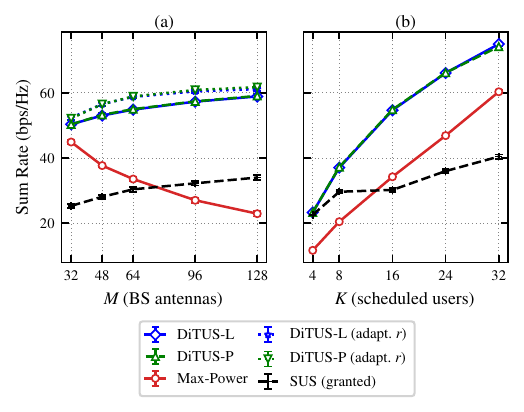}
    \caption{Dimension sweeps at 15~dB with 95\% trial-level confidence intervals: (a) $M$ at $K=16$, $L=N_s=64$, comparing fixed $r=16$ and adaptive $r$; (b) $K$ at $M=64$, $N_s=64$, $r=K$, and $L(K)=\max(64,4K)$. The adaptive-$r$ and $K$-dependent-$L$ cases do not use equal feedback budgets.}
    \label{fig:combined-MK-sweep}
\end{figure}

\section{Conclusion}
\rev{This paper proposed DiTUS, which converts slowly varying DT multipath statistics into projection-energy or log-determinant prescreening before two-level feedback and WMMSE precoding. Its FDD resource model separates $N_s$ scalar reports, $K$ scheduled-user effective-CSI reports, reference signals, and calibration anchors. Across the evaluated ray-tracing settings, DiTUS quantifies instantaneous-CSI acquisition, fairness, rank, and DT-mismatch tradeoffs. At 15~dB with $N=128$, DiTUS-L achieves $\ProdMainLAtFifteen\!\pm\!\ProdMainLCIAtFifteen$~bps/Hz versus $\ProdMainSusAtFifteen\!\pm\!\ProdMainSusCIAtFifteen$~bps/Hz for $L$-user SUS; scheduled-user causal GP calibration also recovers most of the full-pool reference gain.}

\rev{Limitations include the absence of a specified $B_g(r)$ quantizer and rate-versus-total-feedback-bit curve, reliance on slow-timescale identity, location, and path association, and evaluation in one narrowband single-cell ray-tracing scenario. Future work will address delay-aware wideband prescreening, uncertainty-aware calibration, and multi-cell or cell-free operation with adaptive $N_s$, $L$, and $r$.}

\appendices

\section{Proof of Proposition~\ref{prop:rate-gap}}
\label{app:prop2}
By definition, the two surrogate rates are $R_{\rm full}^{\rm IF}=\sum_k\log_2(1+\rho_{\mathrm{tx}}\|\mathbf h_k\|^2)$ and $R_{\rm sub}^{\rm IF}=\sum_k\log_2(1+\rho_{\mathrm{tx}}\|\tilde{\mathbf h}_k\|^2)$. Their difference decomposes exactly per user. Therefore, the bound below applies to the interference-free surrogate rates and does not bound interference-coupled WMMSE rates.

Using the effective-channel vector defined in~\eqref{eq:effective-csi-vector}, decompose $\mathbf{h}_k = \overline{\mathbf{U}}\tilde{\mathbf{h}}_k + \mathbf{h}_k^{\perp}$, where $\mathbf{h}_k^{\perp} = (\mathbf{I}_M-\overline{\mathbf{U}}\overline{\mathbf{U}}^{\mathsf{H}})\mathbf{h}_k$. Since $\overline{\mathbf{U}}$ has orthonormal columns, $\overline{\mathbf{U}}\overline{\mathbf{U}}^{\mathsf{H}}$ is an orthogonal projector, giving the Pythagorean identity $\|\mathbf{h}_k\|^2 = \|\tilde{\mathbf{h}}_k\|^2 + \|\mathbf{h}_k^{\perp}\|^2$. Any two-stage $\mathbf{F}=\overline{\mathbf{U}}\mathbf{V}$ satisfies $\mathbf{h}_k^{\mathsf{H}}\mathbf{F} = \tilde{\mathbf{h}}_k^{\mathsf{H}}\mathbf{V}$, so $\mathbf{h}_k^{\perp}$ contributes no beamforming gain.

For each user, the surrogate-rate gap satisfies
\begin{equation*}
\Delta R_k = \log_2\!\left( \frac{1 + \rho_{\mathrm{tx}} \|\mathbf{h}_k\|^2}{1 + \rho_{\mathrm{tx}} \|\tilde{\mathbf{h}}_k\|^2} \right)
\leq \log_2\!\left( 1 + \rho_{\mathrm{tx}} \|\mathbf{h}_k^{\perp}\|^2 \right),
\end{equation*}
where the last inequality follows from $(1 + a + b)/(1 + a) \leq 1 + b$ for $a, b \geq 0$. Summing over $k = 1, \ldots, K$ and substituting $\epsilon_k \triangleq \|\mathbf{h}_k^{\perp}\|^2$ gives~\eqref{eq:rate-gap-bound}.

Taking expectations under the second-moment condition $\mathbb E[\mathbf h_k\mathbf h_k^{\mathsf H}]=\mathbf R_k$, we have $\mathbb E[\epsilon_k]=\operatorname{tr}((\mathbf I_M-\overline{\mathbf U}\overline{\mathbf U}^{\mathsf H})\mathbf R_k)$. Define the scheduled-set average covariance $\mathbf R_{\mathcal K}\triangleq K^{-1}\sum_{k=1}^K\mathbf R_k$. Jensen's inequality gives
\begin{equation*}
\mathbb E[R_{\rm full}^{\rm IF}-R_{\rm sub}^{\rm IF}]
\le K\log_2\!\left(1+\rho_{\rm tx}\operatorname{tr}\!\left((\mathbf I_M-\overline{\mathbf U}\overline{\mathbf U}^{\mathsf H})\mathbf R_{\mathcal K}\right)\right).
\end{equation*}
If, additionally, $\mathbf R_{\mathcal K}=\mathbf R$, then the residual term is $\sum_{i=r+1}^M\lambda_i=(1-\eta(r))\operatorname{tr}(\mathbf R)$, yielding
\begin{equation*}
\mathbb E[R_{\rm full}^{\rm IF}-R_{\rm sub}^{\rm IF}]
\le K\log_2\!\left(1+\rho_{\rm tx}(1-\eta(r))\operatorname{tr}(\mathbf R)\right).
\end{equation*}
\hfill $\blacksquare$

\section{Proof of Proposition~\ref{prop:prescreen-loss}}
\label{app:thm1}
For the effective-channel vector $\tilde{\mathbf{h}}_u$ defined in~\eqref{eq:effective-csi-vector}, any unit-norm beam $\mathbf v_u$ gives the single-user rate
\[
R_u(\mathbf v_u)=\log_2\!\big(1+\rho_{\mathrm{tx}}|\tilde{\mathbf h}_u^{\sf H}\mathbf v_u|^2\big)
\le R_u^{\mathrm{MF}}\triangleq\log_2\!\big(1+\rho_{\mathrm{tx}}\|\tilde{\mathbf{h}}_u\|^2\big),
\]
with equality for the matched direction $\mathbf{v}_u = \tilde{\mathbf{h}}_u/\|\tilde{\mathbf{h}}_u\|$ when $\tilde{\mathbf h}_u\neq\mathbf0$; if $\tilde{\mathbf h}_u=\mathbf0$, both sides are zero. Thus $R_u^{\mathrm{MF}}$ is the single-user matched-filter upper bound used in Proposition~\ref{prop:prescreen-loss}.

Because $x\mapsto\log_2(1+\rho_{\mathrm{tx}}x)$ is concave in $x \ge 0$, Jensen's inequality gives
\begin{align*}
  \mathbb{E}\!\big[R_u^{\mathrm{MF}}\big]
  &= \mathbb{E}\!\big[\log_2(1+\rho_{\mathrm{tx}}\|\tilde{\mathbf{h}}_u\|^2)\big] \\
  &\le \log_2\!\big(1+\rho_{\mathrm{tx}}\,\mathbb{E}[\|\tilde{\mathbf{h}}_u\|^2]\big) \\
  &= \log_2\!\big(1+\rho_{\mathrm{tx}}\,s_u\big),
\end{align*}
where, under the analytical second-moment assumption $\mathbb{E}[\mathbf{h}_u\mathbf{h}_u^{\sf H}]=\hat{\mathbf{R}}_u$, $s_u=\mathbb{E}[\|\tilde{\mathbf{h}}_u\|^2]=\mathrm{tr}(\overline{\mathbf{U}}^{\sf H}\hat{\mathbf{R}}_u\overline{\mathbf{U}})$ by~\eqref{eq:prescreen-score}. Under DT mismatch, this bound applies to the channel law represented by $\hat{\mathbf R}_u$ and need not bound the expectation under the physical covariance.

For any user $u \notin \mathcal{S}$, the top-$N_s$ rule ensures $s_u \le s_{(N_s)}$, so
\[
  \mathbb{E}\!\big[R_u^{\mathrm{MF}}\big] \le \log_2\!\big(1+\rho_{\mathrm{tx}}\,s_{(N_s)}\big),
\]
establishing~\eqref{eq:prescreen-bound}. Summing over all $N-N_s$ discarded users yields~\eqref{eq:total-loss}. \hfill$\blacksquare$

\section{Proof of Lemma~\ref{prop:logdet}}
\label{app:lem1}
Let $F(\mathcal{T})=\log\det(\mathbf{I}_r+\gamma\boldsymbol{\Sigma}_{\mathcal{T}})$ with $\boldsymbol{\Sigma}_{\mathcal{T}}=\sum_{u\in\mathcal{T}}\mathbf{C}_u\succeq\mathbf{0}$. Clearly $F(\emptyset)=0$. For $\mathcal{T}\subseteq\mathcal{B}$, $\boldsymbol{\Sigma}_{\mathcal{T}}\preceq\boldsymbol{\Sigma}_{\mathcal{B}}$ gives $F(\mathcal{T})\le F(\mathcal{B})$ by monotonicity of matrix $\log\det$ on the positive-definite cone. Sylvester's determinant identity gives the marginal
\[
\Delta F(u\mid\mathcal{T})=
\log\det\!\left(\mathbf{I}_r+\gamma\mathbf{C}_u^{1/2}
(\mathbf{I}_r+\gamma\boldsymbol{\Sigma}_{\mathcal{T}})^{-1}
\mathbf{C}_u^{1/2}\right).
\]
Since $(\mathbf{I}_r+\gamma\boldsymbol{\Sigma}_{\mathcal{T}})^{-1}\succeq(\mathbf{I}_r+\gamma\boldsymbol{\Sigma}_{\mathcal{B}})^{-1}$, congruence by $\mathbf C_u^{1/2}$ and monotonicity of $\log\det(\mathbf I_r+\gamma\mathbf X)$ for $\mathbf X\succeq0$ yield $\Delta F(u\mid\mathcal{T})\ge\Delta F(u\mid\mathcal{B})$. Thus $F$ is normalized, monotone, and submodular. Applying the standard cardinality-constrained greedy theorem on the fixed ground set $\mathcal V_{\rm cand}$ gives the bound in Lemma~\ref{prop:logdet}~\cite{Nemhauser:1978,Shamaiah:2010}. \hfill$\blacksquare$


\end{document}